\documentclass[11pt,a4paper]{article}

\usepackage[utf8]{inputenc}
\usepackage[T1]{fontenc}
\usepackage{lmodern}
\usepackage{amsmath,amssymb,amsthm}
\usepackage{algorithm}
\usepackage{algpseudocode}
\usepackage{array}
\usepackage{booktabs}
\usepackage{graphicx}
\usepackage{enumitem}
\usepackage{xspace}
\usepackage{tikz}
\usetikzlibrary{arrows.meta,positioning,shapes.geometric,calc,fit,decorations.pathmorphing}
\usepackage[margin=1in]{geometry}
\usepackage{caption}
\usepackage{subcaption}
\usepackage{hyperref}

\usepackage{xcolor}

\newtheorem{theorem}{Theorem}[section]
\newtheorem{lemma}[theorem]{Lemma}

\newtheorem{proposition}[theorem]{Proposition}
\theoremstyle{definition}
\newtheorem{definition}{Definition}[section]
\newtheorem{assumption}{Assumption}[section]
\theoremstyle{remark}
\newtheorem{remark}{Remark}[section]

\newcommand{\REV}{\mathsf{REV}}
\newcommand{\SA}{\mathsf{SA}}
\newcommand{\AP}{\mathcal{AP}}
\newcommand{\Ctx}{\mathsf{Ctx}}
\newcommand{\KeyAP}[1][]{\mathsf{Key}_{\AP_{#1}}}
\newcommand{\Cred}{\mathsf{Cred}}
\newcommand{\AF}{\mathsf{AdminFactor}}
\newcommand{\UC}{\mathsf{UserCred}}
\newcommand{\Adv}{\mathcal{A}}
\newcommand{\negl}{\mathsf{negl}}
\newcommand{\secparam}{\lambda}
\newcommand{\getsr}{\stackrel{\$}{\leftarrow}}
\newcommand{\concat}{\,\|\,}
\newcommand{\ACEGF}{\textsf{ACE-GF}\xspace}
\newcommand{\CTDAP}{\textsf{CT-DAP}\xspace}

\title{Condition-Triggered Cryptographic Asset Control\\via Dormant Authorization Paths}

\author{Jian Sheng Wang\\
  Yeah LLC\\
  \texttt{jason@yeah.app}
}

\date{\today}

\begin{document}
\maketitle

\begin{abstract}
Control of encrypted digital assets is traditionally equated with permanent
possession of private keys---a model that fundamentally precludes regulatory
supervision, conditional delegation, and legally compliant transfer at the
cryptographic layer.  Existing remedies, including multi-signature schemes,
threshold signature protocols, smart-contract-based enforcement, and custodial
delegation, require either persistent key exposure, on-chain state mutation,
or trusted intermediaries, and therefore fail to provide immediate, revocable,
and condition-dependent control purely through cryptographic means.

We introduce \emph{Condition-Triggered Dormant Authorization Paths}
(\CTDAP), a cryptographic asset control method built on
\emph{destructible authorization factors}.  The method is parameterized by
an abstract \emph{root-derivable framework} satisfying three properties:
deterministic key derivation from a single root entity, context-isolated
control capability generation, and authorization-bound revocation.
Under \CTDAP, asset control rights are modeled as \emph{dormant
authorization paths} composed of user-held credentials and one or more
administrative authorization factors held by independent custodians.
A control right remains cryptographically inactive until all required
authorization factors are simultaneously available.

Upon verification of predefined external conditions---such as explicit user
consent, legally recognized inheritance events, or time-based
triggers---the corresponding administrative authorization factor is
released, activating the associated control path.  Revocation is achieved
by destroying authorization factors, rendering the corresponding control
path permanently unusable without altering the underlying cryptographic
root.

We formalize the threat model, define security games for unauthorized
control resistance, path isolation, and stateless revocation, and provide
proofs under standard assumptions (AEAD security---IND-CPA and
INT-CTXT---of AES-GCM-SIV, PRF security of HKDF in the random-oracle
model, memory-hardness of Argon2id, and collision resistance of
SHA-256).  We
instantiate the framework using the Atomic Cryptographic Entity
Generative Framework (ACE-GF) and evaluate performance on commodity
hardware, demonstrating sub-second activation latency with
configurable security--performance trade-offs.

\medskip
\noindent\textbf{Keywords:}
Cryptographic Asset Control;
Dormant Authorization Path;
Destructible Authorization Factor;
Conditional Activation;
Stateless Revocation;
Key Derivation;
ACE-GF
\end{abstract}

\section{Introduction}\label{sec:intro}

\subsection{Background and Motivation}\label{sec:background}

Control of encrypted digital assets is a foundational problem in modern
cryptographic systems.  In prevailing designs, control rights are
implicitly equated with permanent possession of cryptographic secrets,
typically private keys.  While this model provides simplicity and strong
autonomy, it fundamentally limits the expressiveness of asset control in
real-world scenarios that require conditional delegation, regulatory
supervision, inheritance, or legally enforceable transfer of rights.

In particular, existing cryptographic asset systems lack native mechanisms
for representing control rights that are \emph{temporarily inactive},
\emph{conditionally enabled}, or \emph{revocable without state migration}.
As a result, control transitions are commonly implemented through ad hoc
constructions---such as trusted custodians, contractual agreements
external to the cryptographic system, or on-chain smart contracts---that
introduce additional attack surfaces and operational complexity.

These limitations are increasingly problematic as encrypted assets are
integrated into regulated financial systems, estate planning, and
multi-party organizational governance, where control must be exercised not
only securely, but also conditionally and compliantly.

\subsection{Limitations of Existing Approaches}\label{sec:limitations}

Several classes of solutions have been proposed to address aspects of
conditional or shared control; each suffers from structural limitations.

\paragraph{Permanent Key Possession.}
Systems that rely on a single controlling private key provide no intrinsic
mechanism for conditional access or delegation.  Any transfer of control
requires direct disclosure or replacement of the controlling secret,
resulting in irreversible loss of exclusivity and making revocation
impractical~\cite{bip32,bip39}.

\paragraph{Multi-Signature and Threshold Schemes.}
Multi-signature constructions~\cite{goldfeder2015multisig} and threshold
signature protocols (e.g., GG18~\cite{gg18}, GG20~\cite{gg20},
FROST~\cite{frost}) distribute control across multiple keys and parties
but require persistent availability of all participating keys.  Changes in
control conditions necessitate reconfiguration of signing policies or
migration of assets, introducing operational overhead.  Moreover, these
schemes conflate control enforcement with participation requirements,
making revocation and conditional delegation stateful and
inflexible~\cite{gennaro1996robust}.

\paragraph{Secret Sharing.}
Shamir's secret sharing~\cite{shamir79} and its proactive
variants~\cite{proactivess} can distribute a master secret among
multiple parties with a threshold reconstruction policy.  However,
reconstructing the secret reveals it to the reconstructing party,
and the scheme does not natively support conditional activation
tied to external events or stateless revocation of individual shares.

\paragraph{Smart-Contract-Based Enforcement.}
Smart contracts enable programmable control logic, including conditional
execution, but at the cost of on-chain state mutation, code immutability
risks, and dependency on external data
sources~\cite{wood2014ethereum,szabo1997smart}.  Conditional triggers
implemented via contracts are vulnerable to oracle
manipulation~\cite{adler2018astraea}, logic errors, and jurisdictional
ambiguity, and often fail to align cryptographic enforcement with
off-chain legal authorization.

\paragraph{Attribute-Based and Policy-Based Encryption.}
Ciphertext-policy attribute-based encryption
(CP-ABE)~\cite{bethencourt2007cpabe} and key-policy ABE
(KP-ABE)~\cite{goyal2006kpabe} enable fine-grained access control
policies embedded in ciphertexts or keys.  While these schemes support
conditional decryption, they are designed for data confidentiality
rather than cryptographic asset control, and do not address revocable
control paths or stateless revocation without re-encryption.

\paragraph{Time-Lock Cryptography.}
Time-lock puzzles~\cite{rivest1996timelock} and verifiable delay
functions (VDFs)~\cite{boneh2018vdf} enable time-conditioned
release of secrets.  However, they support only temporal conditions
(not arbitrary external events), and revocation after puzzle generation
is impossible.

\paragraph{Custodial Models.}
Custodial solutions externalize control to trusted intermediaries,
sacrificing cryptographic autonomy and introducing centralized trust
assumptions.  While custodians may enforce legal or regulatory
conditions, cryptographic control is no longer directly tied to the
asset holder or beneficiaries~\cite{he2020smart}.

\paragraph{Social Recovery.}
Social recovery wallets~\cite{wohrer2021social} allow
designated guardians to assist in key recovery, but conflate recovery
with ongoing control and do not support dormant, conditionally activated
control paths.

\medskip
Collectively, these approaches conflate \emph{control} with either
\emph{key possession}, \emph{contract state}, or \emph{institutional
trust}, and do not provide a unified cryptographic abstraction for
conditional and revocable asset control.

\subsection{Problem Statement}\label{sec:problem}

The core problem addressed in this work is:

\begin{quote}
\itshape
How can control of encrypted digital assets be conditionally enabled,
revoked, and transferred in a manner that is cryptographically
enforceable, legally auditable, and does not require persistent key
exposure, asset migration, or on-chain state modification?
\end{quote}

A satisfactory solution must satisfy several requirements simultaneously:

\begin{itemize}[leftmargin=2em]
  \item \textbf{Cryptographic Enforceability:} Unauthorized parties must
    be unable to exercise control regardless of off-chain agreements.
  \item \textbf{Conditional Activation:} Control rights should remain
    inactive until predefined external conditions are satisfied.
  \item \textbf{Immediate Revocation:} Control must be disableable
    without rotating keys or modifying asset ownership structures.
  \item \textbf{Statelessness:} Control transitions should not require
    maintaining mutable global state.
  \item \textbf{Legal Compatibility:} Cryptographic control should align
    with legally recognized authorization and delegation processes.
\end{itemize}

Existing systems satisfy at most a subset of these requirements, and often
only by introducing additional trusted components.

\subsection{Core Insight}\label{sec:insight}

This paper is based on a key conceptual insight:

\begin{quote}
\itshape Control of encrypted assets need not be transferred; it can be
activated.
\end{quote}

Rather than equating control with possession of cryptographic secrets, we
model control rights as \emph{authorization paths} that exist prior to
activation.  These authorization paths remain cryptographically dormant
until all required authorization factors are present.  Activation of
control is achieved not by reassigning keys or modifying assets, but by
enabling a pre-established authorization path under verifiable conditions.

Revocation is performed by destroying or invalidating authorization
factors, rendering the corresponding control path permanently unusable
without altering the underlying cryptographic root or asset identity.

\subsection{Proposed Approach}\label{sec:approach}

We propose \emph{Condition-Triggered Dormant Authorization Paths}
(\CTDAP), a cryptographic asset control method based on
\emph{destructible authorization factors}.  The method is parameterized
by an abstract \emph{root-derivable framework} (\autoref{def:rdf})
satisfying three properties: deterministic derivation from a single root
entity, context-isolated key generation, and authorization-bound
revocation via destructible credential components.

Under \CTDAP, each control right is represented as a dormant
authorization path composed of:
\begin{itemize}[leftmargin=2em]
  \item One or more user-held authorization credentials, and
  \item One or more administrative authorization factors held by
        independent custodians.
\end{itemize}

A control right becomes effective only when all required authorization
factors are simultaneously available.  Administrative authorization
factors may be conditionally released upon verification of external
events, such as explicit user consent, legally recognized inheritance
conditions, or time-based triggers.

\subsection{Contributions}\label{sec:contributions}

The core contributions of this work are:

\begin{enumerate}[leftmargin=2em]
  \item \textbf{Formal Model.}
    We define the \CTDAP model with formal syntax
    (\autoref{sec:model}), specifying algorithms for path construction,
    activation, and revocation, parameterized by an abstract
    root-derivable framework.

  \item \textbf{Formal Security Analysis.}
    We define a threat model (\autoref{sec:threat-model}) and three
    security games---unauthorized control resistance, cross-path
    isolation, and stateless revocation---and prove security under
    standard cryptographic assumptions (\autoref{sec:security}).

  \item \textbf{Authorization-Bound Revocation.}
    We formalize an immediate, stateless revocation mechanism based
    on destructible authorization factors that resolves the
    long-standing trade-off between revocability and operational
    simplicity.

  \item \textbf{Concrete Instantiation and Evaluation.}
    We instantiate \CTDAP using the Atomic Cryptographic Entity
    Generative Framework (\ACEGF)~\cite{acegf-tr} and evaluate
    performance, showing sub-second activation latency with
    configurable security--performance trade-offs
    (\autoref{sec:performance}).

  \item \textbf{Application Scenarios.}
    We demonstrate the generality of the method across regulated
    custody, inheritance planning, delegated control, and multi-party
    governance.
\end{enumerate}

\subsection{Organization}\label{sec:organization}

\autoref{sec:related} surveys related work.
\autoref{sec:model} defines the system model, trust assumptions, and formal
syntax.
\autoref{sec:construction} presents the authorization path construction
using \ACEGF.
\autoref{sec:activation} describes the condition-triggered activation and
revocation mechanisms.
\autoref{sec:security} provides the formal security analysis.
\autoref{sec:comparison} compares the proposed method with existing
approaches.
\autoref{sec:scenarios} presents illustrative application scenarios.
\autoref{sec:discussion} discusses practical considerations, limitations,
and future directions.
\autoref{sec:performance} presents performance benchmarks.
\autoref{sec:conclusion} concludes the paper.

\section{Related Work}\label{sec:related}

This section surveys the landscape of cryptographic asset control,
positioning our work relative to established and emerging approaches.

\subsection{Single-Key and Hierarchical Deterministic Wallets}

Traditional cryptocurrency wallets equate control with possession of a
single private key or a hierarchically derived key tree
(BIP-32~\cite{bip32}, BIP-39~\cite{bip39}, BIP-44~\cite{bip44}).
These systems offer simplicity and full user autonomy but provide no
native support for conditional activation, delegation, or revocation.
Any control transition requires direct key transfer or asset migration,
both of which are irreversible and operationally costly.

\subsection{Multi-Signature and Threshold Signature Schemes}

Multi-signature wallets~\cite{goldfeder2015multisig} require $m$-of-$n$
signatures to authorize a transaction.  Threshold signature schemes
(TSS)---including GG18~\cite{gg18}, GG20~\cite{gg20}, and
FROST~\cite{frost}---eliminate the need for on-chain multi-sig scripts
by distributing key generation and signing across $n$ parties with a
threshold $t$.

While TSS reduces on-chain footprint, both approaches require
persistent availability of all participating key shares, lack native
support for dormant (conditionally inactive) control rights, and
require resharing or migration when control conditions change.
Proactive secret sharing~\cite{proactivess} can
refresh shares periodically, but does not address conditional
activation tied to external events.

\subsection{Secret Sharing and Social Recovery}

Shamir's secret sharing~\cite{shamir79} enables $(t,n)$-threshold
reconstruction of a master secret.  Social recovery
wallets~\cite{wohrer2021social} apply this principle by
designating guardians who can collectively recover access.

These schemes conflate \emph{recovery} with \emph{ongoing control}:
once the secret is reconstructed, the reconstructing party gains full
and irrevocable access.  There is no mechanism for conditional
activation that preserves dormancy, nor for revoking a specific
guardian's capability without redistributing shares.

\subsection{Attribute-Based and Policy-Based Access Control}

Ciphertext-policy ABE (CP-ABE)~\cite{bethencourt2007cpabe} and
key-policy ABE (KP-ABE)~\cite{goyal2006kpabe} embed fine-grained
access policies into the encryption layer.  While conceptually related
to our notion of conditional activation, ABE schemes are designed for
\emph{data confidentiality} (who can decrypt a ciphertext) rather than
\emph{asset control} (who can derive a signing key or exercise
cryptographic authority over an asset).  Furthermore, revocation in ABE
typically requires re-encryption or key update
mechanisms~\cite{sahai2012abe-revocation}, which conflict with our
statelessness requirement.

\subsection{Smart-Contract-Based Enforcement}

Smart contracts~\cite{wood2014ethereum,szabo1997smart} enable
programmable, on-chain enforcement of control policies, including
multi-sig, time-locks, and conditional transfers.  However,
contract-based control depends on on-chain state mutation, is vulnerable
to oracle manipulation~\cite{adler2018astraea}, and ties control
semantics to a specific blockchain's execution model.  Control
transitions require explicit asset migration (e.g., transferring tokens
to a new contract address), which is neither stateless nor
chain-agnostic.

\subsection{Time-Lock Cryptography and Verifiable Delay Functions}

Time-lock puzzles~\cite{rivest1996timelock} and verifiable delay
functions (VDFs)~\cite{boneh2018vdf} provide time-conditioned secret
release.  While useful for temporal conditions, they are limited to
a single trigger type (elapsed time), cannot incorporate arbitrary
external events (e.g., legal recognition of inheritance), and are
irrevocable once the puzzle is generated.  Our method supports
arbitrary condition types and immediate revocation.

\subsection{MPC-Based Custody Solutions}

Institutional custody platforms (e.g.,~\cite{lindell2021mpc}) use
multi-party computation (MPC) to distribute signing authority without
reconstructing the private key.  While MPC custody preserves
non-disclosure of the full key, it requires persistent online
participation of MPC nodes, and policy changes require re-keying.
Our approach achieves conditional control through dormant paths that
can be activated or revoked without any online coordination among
authorization factor holders.

\subsection{Positioning of This Work}

\autoref{tab:related-comparison} summarizes the positioning of \CTDAP
relative to existing approaches.  Unlike prior work, \CTDAP provides
a unified abstraction that decouples asset identity from
condition-dependent control, supporting dormant rights, conditional
activation, and stateless revocation without on-chain state or trusted
intermediaries.

\begin{table}[t]
\centering
\caption{Comparison of cryptographic asset control approaches.
  \checkmark = natively supported,
  $\circ$ = partially or with workarounds,
  $\times$ = not supported.}
\label{tab:related-comparison}
\small
\begin{tabular}{@{}lccccc@{}}
\toprule
\textbf{Approach} &
\rotatebox{60}{\textbf{Conditional}} &
\rotatebox{60}{\textbf{Dormant}} &
\rotatebox{60}{\textbf{Stateless}} &
\rotatebox{60}{\textbf{Immediate}} &
\rotatebox{60}{\textbf{No On-Chain}} \\
& \rotatebox{60}{\textbf{Activation}} &
\rotatebox{60}{\textbf{Rights}} &
\rotatebox{60}{\textbf{Revocation}} &
\rotatebox{60}{\textbf{Revocation}} &
\rotatebox{60}{\textbf{State}} \\
\midrule
Single-Key / HD Wallet    & $\times$ & $\times$ & $\times$ & $\times$ & \checkmark \\
Multi-Sig / TSS           & $\circ$  & $\times$ & $\times$ & $\circ$  & $\circ$ \\
Secret Sharing / Recovery & $\circ$  & $\times$ & $\times$ & $\times$ & \checkmark \\
CP-ABE / KP-ABE           & \checkmark & $\circ$ & $\times$ & $\circ$  & \checkmark \\
Smart Contracts           & \checkmark & $\circ$ & $\times$ & $\circ$  & $\times$ \\
Time-Lock / VDF           & $\circ$  & \checkmark & $\times$ & $\times$ & \checkmark \\
MPC Custody               & $\circ$  & $\times$ & $\times$ & $\circ$  & \checkmark \\
\textbf{\CTDAP (Ours)}    & \checkmark & \checkmark & \checkmark & \checkmark & \checkmark \\
\bottomrule
\end{tabular}
\end{table}

\section{System Model}\label{sec:model}

This section defines the abstract system model, trust assumptions, and
formal syntax for the \CTDAP method.

\subsection{Entities and Roles}\label{sec:entities}

The system involves the following entities:

\begin{itemize}[leftmargin=2em]
  \item \textbf{Asset Owner ($\mathcal{O}$):}
    The entity that holds the cryptographic root and establishes
    authorization paths.  $\mathcal{O}$ possesses the user-held
    credential $\UC$ and the sealed artifact $\SA$.

  \item \textbf{Custodian ($\mathcal{C}_i$):}
    An independent entity that holds and conditionally releases a
    single administrative authorization factor $\AF_i$.
    Custodians cannot unilaterally exercise asset control.

  \item \textbf{Beneficiary ($\mathcal{B}_j$):}
    An entity designated to receive control capability upon
    activation of authorization path $\AP_j$.  A beneficiary may
    also act as the asset owner in self-authorization scenarios.

  \item \textbf{Condition Verifier ($\mathcal{V}$):}
    An entity (or protocol) that verifies whether a predefined
    external condition is satisfied and provides a
    cryptographically signed attestation to the custodian.
    The verifier may be a trusted third party (e.g., notary,
    regulatory body) or a decentralized oracle
    network~\cite{adler2018astraea}.
\end{itemize}

\subsection{Trust Model and Assumptions}\label{sec:threat-model}

We formalize the trust assumptions that underpin the security of \CTDAP.

\begin{assumption}[Honest-but-Curious Custodians]\label{asm:custodian}
  Each custodian $\mathcal{C}_i$ is assumed to faithfully hold and
  conditionally release $\AF_i$ according to predefined release
  conditions.  However, custodians may attempt to learn information
  about other authorization factors or the root secret from the
  information available to them.  Custodians do not collude with
  each other or with unauthorized parties unless explicitly modeled.
\end{assumption}

\begin{assumption}[Bounded Adversary]\label{asm:adversary}
  The adversary $\Adv$ is a probabilistic polynomial-time (PPT)
  algorithm that may:
  \begin{enumerate}[label=(\roman*)]
    \item Compromise up to $t < n$ custodians for a given path with
      $n$ authorization factors (obtaining their $\AF_i$ values);
    \item Obtain the sealed artifact $\SA$ (e.g., by compromising
      storage);
    \item Observe activated control keys from \emph{other} paths
      $\AP_k$ ($k \neq j$);
    \item Interact with the condition verifier.
  \end{enumerate}
  The adversary \emph{cannot} simultaneously obtain \emph{all}
  components of $\Cred_{\mathrm{composite}}$ for a target path.
\end{assumption}

\begin{assumption}[Condition Verifier Integrity]\label{asm:verifier}
  The condition verifier $\mathcal{V}$ correctly evaluates the
  predefined condition and provides attestations only when the
  condition is genuinely satisfied.  For deployments requiring
  stronger guarantees, $\mathcal{V}$ may be instantiated as a
  decentralized oracle with multi-source aggregation to mitigate
  single-point-of-failure risks (see \autoref{sec:condition-verification}).
\end{assumption}

\begin{assumption}[Cryptographic Hardness]\label{asm:crypto}
  We assume:
  \begin{enumerate}[label=(\roman*)]
    \item The sealing mechanism (e.g., AES-256-GCM-SIV) provides
      both IND-CPA security (ciphertext indistinguishability) and
      INT-CTXT security (ciphertext integrity / authentication):
      decryption with an incorrect key returns $\bot$ with
      overwhelming probability~\cite{rfc8452};
      \label{asm:aead}
    \item HKDF-SHA256~\cite{rfc5869} is a pseudorandom function (PRF)
      in the random oracle model;
      \label{asm:prf}
    \item Argon2id~\cite{argon2} provides memory-hard key derivation
      with resistance to offline brute-force attacks;
      \label{asm:mh}
    \item The hash function $H$ used for path identifiers (e.g.,
      SHA-256) is collision-resistant: for any PPT adversary $\Adv$,
      $\Pr[\Adv \text{ finds } x \neq x' \text{ s.t.\ } H(x) =
      H(x')] \leq \negl(\secparam)$.
      \label{asm:cr}
  \end{enumerate}
\end{assumption}

\subsection{Root-Derivable Framework (Abstract Interface)}\label{sec:rdf}

\CTDAP is parameterized by a \emph{root-derivable framework}
satisfying the following interface:

\begin{definition}[Root-Derivable Framework]\label{def:rdf}
  A root-derivable framework $\Pi$ is a tuple of PPT algorithms
  $(\mathsf{Setup}, \mathsf{Seal}, \mathsf{Unseal}, \mathsf{Derive})$
  such that:

  \begin{itemize}[leftmargin=2em]
    \item $\mathsf{Setup}(1^\secparam) \to (\REV, \mathsf{params})$:
      Generates a 256-bit Root Entropy Value and public parameters.

    \item $\mathsf{Seal}(\mathsf{params}, \Cred_{\mathrm{composite}},
      \REV) \to \SA$:
      Encrypts $\REV$ under the composite credential, producing
      a sealed artifact.

    \item $\mathsf{Unseal}(\mathsf{params}, \SA,
      \Cred_{\mathrm{composite}}) \to \REV \cup \{\bot\}$:
      Reconstructs $\REV$ if and only if the correct composite
      credential is provided; otherwise returns $\bot$.

    \item $\mathsf{Derive}(\REV, \Ctx) \to \KeyAP$:
      Deterministically derives a context-isolated control key from
      $\REV$ and a context tuple $\Ctx$.
  \end{itemize}
\end{definition}

\begin{definition}[Required Security Properties]\label{def:rdf-security}
  The framework $\Pi$ must satisfy:
  \begin{enumerate}[label=\textbf{P\arabic*}]
    \item \textbf{Sealing Indistinguishability:}
      For any two roots $\REV_0, \REV_1$ of equal length, no PPT
      adversary without $\Cred_{\mathrm{composite}}$ can distinguish
      the two sealed artifacts
      \[
        \mathsf{Seal}(\mathsf{params},
        \Cred_{\mathrm{composite}}, \REV_b),
        \quad b \in \{0,1\}
      \]
      with non-negligible advantage.
      \label{prop:seal-ind}

    \item \textbf{Unsealing Correctness:}
      $\mathsf{Unseal}(\mathsf{params}, \SA, \Cred') = \bot$ for all
      $\Cred' \neq \Cred_{\mathrm{composite}}$.
      \label{prop:unseal-correct}

    \item \textbf{Context Isolation:}
      For distinct contexts $\Ctx_j \neq \Ctx_k$, the values
      $\mathsf{Derive}(\REV, \Ctx_j)$ and
      $\mathsf{Derive}(\REV, \Ctx_k)$ are computationally
      independent: no PPT adversary given one can
      distinguish the other from uniform with non-negligible
      advantage.
      \label{prop:ctx-isolation}
  \end{enumerate}
\end{definition}

\subsection{Formal Syntax}\label{sec:syntax}

\begin{definition}[\CTDAP Syntax]\label{def:ctdap-syntax}
  The \CTDAP method consists of the following algorithms:

  \begin{itemize}[leftmargin=2em]
    \item $\mathsf{PathSetup}(\Pi, 1^\secparam, \UC, \{\AF_i\}_{i=1}^{n})
      \to (\SA_{\mathrm{root}}, s, \mathsf{params})$:
      Establishes the root entity, generates a random salt
      $s \in \{0,1\}^{128}$, and seals the root under the composite
      credential.  Returns the sealed artifact, the public salt, and
      framework parameters.

    \item $\mathsf{PathDerive}(\Pi, \REV, \Ctx_j) \to \KeyAP[j]$:
      Derives the dormant control key for path $j$.  This is invoked
      only after successful unsealing.

    \item $\mathsf{Activate}(\Pi, \SA, \UC, \{\AF_i\}_{i=1}^{n},
      \Ctx_j) \to \KeyAP[j] \cup \{\bot\}$:
      Reconstructs $\REV$ via $\mathsf{Unseal}$, derives
      $\KeyAP[j]$, zeroizes $\REV$ from memory, and returns the
      control key.  Returns $\bot$ if any factor is missing or
      incorrect.

    \item $\mathsf{Revoke}(i) \to \{\mathsf{revoked}\}$:
      Destroys $\AF_i$ (e.g., secure erasure), making
      $\Cred_{\mathrm{composite}}$ permanently
      irreconstructible for all paths depending on~$\AF_i$.
  \end{itemize}
\end{definition}

\begin{definition}[Authorization Path]\label{def:auth-path}
  The $j$-th authorization path $\AP_j$ is defined as a 4-tuple:
  \[
    \AP_j = (\SA_{\mathrm{root}},\; s_j,\; \UC_j,\; \mathcal{F}_j)
  \]
  where:
  \begin{itemize}[leftmargin=2em]
    \item $\SA_{\mathrm{root}}$: the sealed artifact (persistent
      ciphertext of $\REV_{\mathrm{root}}$; the root entropy value
      $\REV_{\mathrm{root}}$ itself is \emph{never} stored
      persistently---see \autoref{alg:path-setup}).
    \item $s_j \in \{0,1\}^{128}$: the per-path random salt for
      Argon2id.
    \item $\UC_j$: a path-specific user-held credential.
    \item $\mathcal{F}_j = \{\AF_{j,1}, \ldots, \AF_{j,n_j}\}$:
      the set of destructible administrative factors for path $j$.
  \end{itemize}
  A path is \emph{dormant} when at least one $\AF_{j,i}$ is withheld.
  A path is \emph{active} when all components are simultaneously
  available, enabling ephemeral reconstruction of
  $\REV_{\mathrm{root}}$ via $\mathsf{Unseal}$ and subsequent
  derivation of $\KeyAP[j]$.
  A path is \emph{revoked} when at least one $\AF_{j,i}$ is
  permanently destroyed.
\end{definition}

\section{Concrete Instantiation: ACE-GF}\label{sec:primitive}

We instantiate the root-derivable framework $\Pi$ using the
\emph{Atomic Cryptographic Entity Generative Framework}
(\ACEGF)~\cite{acegf-tr}, a seed-storage-free identity
construction.  \ACEGF satisfies all three required properties
(\ref{prop:seal-ind}--\ref{prop:ctx-isolation}) of
\autoref{def:rdf-security}.

\subsection{Key Properties of ACE-GF}\label{sec:acegf-properties}

\begin{enumerate}[leftmargin=2em]
  \item \textbf{Deterministic Root-Derived Control Capabilities:}
    \ACEGF establishes $\REV$ as the cryptographic root entity,
    which exists only ephemerally in memory and is never persistently
    stored.  Control keys are deterministically derived from $\REV$
    via HKDF-SHA256 with explicit context
    encoding~\cite{rfc5869,acegf-tr}.

  \item \textbf{Authorization-Bound Revocation with Destructible
    Components:}
    \ACEGF binds access to $\REV$ through a sealed artifact ($\SA$)
    encrypted under the composite credential.  Revocation is achieved
    by destroying any designated credential component, making $\REV$
    reconstruction permanently impossible~\cite{acegf-tr}.

  \item \textbf{Seed-Storage-Free Operation:}
    \ACEGF eliminates persistent storage of master secrets (e.g.,
    BIP-39 mnemonics).  Only $\SA$---an encrypted representation of
    $\REV$---is stored at rest.

  \item \textbf{Nonce-Misuse Resistance:}
    The sealing mechanism uses AES-256-GCM-SIV~\cite{rfc8452},
    providing deterministic, nonce-misuse-resistant authenticated
    encryption.

  \item \textbf{Memory-Hard Credential Derivation:}
    Credential composition is processed via Argon2id~\cite{argon2},
    ensuring resistance to offline brute-force attacks.
\end{enumerate}

\subsection{Relevance to CT-DAP}\label{sec:acegf-relevance}

\ACEGF's separation of \emph{identity} (rooted in $\REV$) and
\emph{authorization} (governed by destructible credentials) provides
the foundational abstraction for dormant authorization paths:
\begin{itemize}[leftmargin=2em]
  \item $\REV$ serves as the single cryptographic control origin for
    all authorization paths.
  \item \ACEGF's context-isolated key derivation, using
    structured context tuples $\Ctx =
    (\mathsf{AlgID},\allowbreak \mathsf{Domain},\allowbreak
    \mathsf{Index})$, enables parallel, isolated
    authorization paths.
  \item \ACEGF's authorization-bound revocation mechanism directly
    implements the destructible authorization factors core to \CTDAP.
\end{itemize}

\section{Authorization Path Construction}\label{sec:construction}

This section formalizes the construction of dormant authorization
paths---the core cryptographic abstraction of \CTDAP---instantiated
via \ACEGF.

\subsection{Construction Overview}

Each authorization path is a composite structure of user-held
credentials and custodian-managed destructible authorization factors,
anchored to a single $\REV$ in \ACEGF.  The construction proceeds in
three steps, formalized in \autoref{alg:path-setup}.

\begin{algorithm}[t]
\caption{$\mathsf{PathSetup}$: Establish Root Entity and Dormant Paths}
\label{alg:path-setup}
\begin{algorithmic}[1]
\Require Security parameter $1^\secparam$; user credential $\UC$; factors $\{\AF_i\}_{i=1}^{n}$
\Ensure Sealed artifact $\SA_{\mathrm{root}}$; per-path salt
  $s \in \{0,1\}^{128}$; public parameters $\mathsf{params}$

\Statex \textbf{Step 1: Establish Root Entity}
\State $(\REV_{\mathrm{root}}, \mathsf{params}) \gets
  \ACEGF.\mathsf{Setup}(1^\secparam)$
\State $s \getsr \{0,1\}^{128}$
  \Comment{Per-path random salt}
\State $\Cred_{\mathrm{composite}} \gets
  \mathsf{Argon2id}(\UC \concat \AF_1 \concat \cdots \concat \AF_n
  ;\; s;\; m, t, p)$
  \Comment{Memory-hard derivation}
\State $\SA_{\mathrm{root}} \gets
  \ACEGF.\mathsf{Seal}(\mathsf{params},
  \Cred_{\mathrm{composite}}, \REV_{\mathrm{root}})$

\Statex
\Statex \textbf{Step 2: (Optional) Verify Sealed Artifact}
  \Comment{Self-test before zeroizing root}
\State $\REV' \gets \ACEGF.\mathsf{Unseal}(\mathsf{params},
  \SA_{\mathrm{root}}, \Cred_{\mathrm{composite}})$
\State \textbf{assert} $\REV' = \REV_{\mathrm{root}}$
  \Comment{Abort if seal/unseal round-trip fails}
\State $\mathsf{Zeroize}(\REV', \REV_{\mathrm{root}})$
  \Comment{Root exists only ephemerally}

\Statex
\Statex \textbf{Step 3: Seal Path for Dormancy}
\State Encode $\SA_{\mathrm{root}}$ as a standard BIP-39 mnemonic
  (ciphertext encoding)
\State Store salt $s$ alongside $\SA_{\mathrm{root}}$ in cleartext
\State Distribute $\AF_i$ to respective custodians $\mathcal{C}_i$
  via secure channels
\State \Return $(\SA_{\mathrm{root}}, s, \mathsf{params})$

\Statex
\Statex \textit{Note: Context-isolated control keys $\KeyAP[j]$ are
  \textbf{not} derived at setup time.  They are derived on-demand
  during $\mathsf{Activate}$ (\autoref{alg:activate}), which
  ephemerally reconstructs $\REV_{\mathrm{root}}$ from
  $\SA_{\mathrm{root}}$.}
\end{algorithmic}
\end{algorithm}

\subsection{Credential Composition}\label{sec:cred-composition}

The composite credential for path $j$ is constructed by processing
the user credential and all administrative factors through a
memory-hard key derivation function with a \emph{path-specific random
salt} $s_j$:
\begin{equation}\label{eq:cred-composite}
  \Cred_{\mathrm{composite},j} =
    \mathsf{Argon2id}\bigl(
      \underbrace{\UC \concat \AF_1 \concat \cdots \concat \AF_n}_{\text{password}},\;
      s_j,\;
      m,\, t,\, p
    \bigr)
\end{equation}
where $\concat$ denotes concatenation with a non-standard delimiter
(e.g., a null byte \texttt{0x00}) to ensure strict domain separation
between credential components, and:
\begin{itemize}[leftmargin=2em]
  \item $s_j \getsr \{0,1\}^{128}$ is a per-path random salt, stored
    alongside $\SA_{\mathrm{root}}$ in cleartext.  The salt prevents
    cross-target amortization: an adversary attacking path $j$ cannot
    reuse work (e.g., precomputed tables or parallel GPU batches) to
    simultaneously attack path $k \neq j$, even if the underlying
    credential components are identical.
  \item $(m, t, p)$ are the Argon2id memory, time, and parallelism
    parameters, chosen per deployment policy (e.g., $m = 2^{19}$~KiB,
    $t = 3$, $p = 4$ per RFC~9106~\cite{rfc9106}).
\end{itemize}

The use of Argon2id~\cite{argon2} as a memory-hard key derivation
function ensures resistance to offline brute-force attacks, even
if the adversary obtains $\SA_{\mathrm{root}}$ and $s_j$, provided
the missing credential component has sufficient min-entropy
($\geq 128$~bits).

\subsection{Sealed Artifact as Dormant Mnemonic}\label{sec:sealed-artifact}

Unlike traditional systems where a mnemonic directly represents the
plaintext master seed (e.g., BIP-39), the \textbf{Sealed Artifact
($\SA$) is manifested as a standard BIP-39 mnemonic} encoding the
\emph{ciphertext} of $\REV_{\mathrm{root}}$ rather than the plaintext
entropy.

Without the requisite composite credential, $\SA$ is functionally
indistinguishable from random noise and cannot be used to derive any
control capabilities.  The use of AES-GCM-SIV for sealing ensures
deterministic, nonce-misuse-resistant encryption, providing a stable
and auditable mapping between credentials and authorization paths.

\subsection{Isolation Guarantee Across Paths}\label{sec:isolation}

By \ACEGF's context-isolation property (\ref{prop:ctx-isolation}),
control keys derived from the same $\REV_{\mathrm{root}}$ but with
distinct context values are computationally independent:
\begin{equation}\label{eq:isolation}
  \forall j \neq k, \quad
  \mathrm{Adv}_\Adv\bigl(
    \KeyAP[j] \mid \KeyAP[k]
  \bigr) \leq \negl(\secparam)
\end{equation}

This ensures that even if a beneficiary's path $\AP_k$ is activated
or its credentials exposed, the security of other paths remains intact.

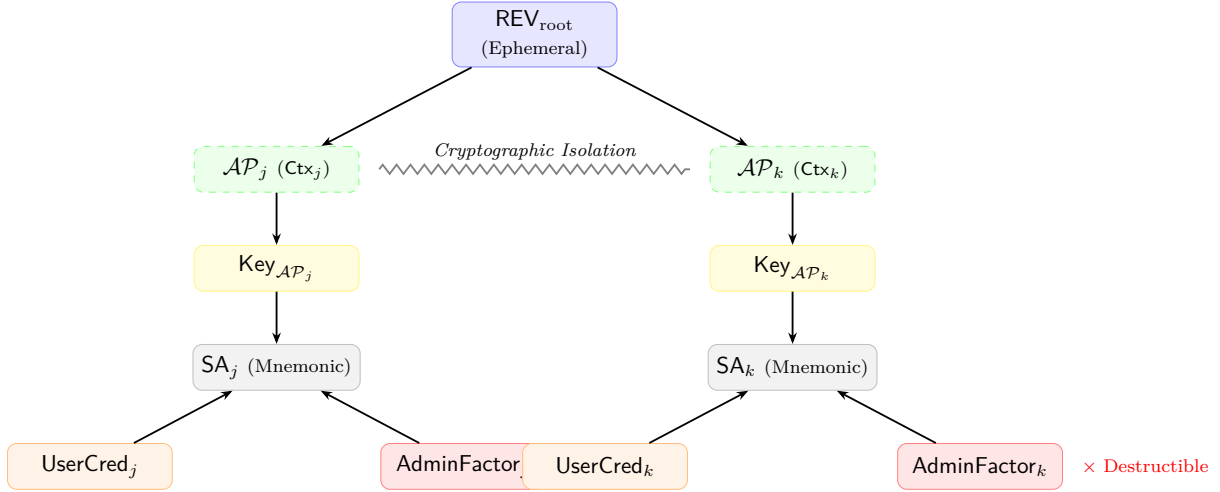
\begin{figure}[t]
\centering
\resizebox{\columnwidth}{!}{%
\begin{tikzpicture}[
  node distance=1.2cm and 2.5cm,
  block/.style={rectangle, draw, rounded corners, minimum width=2.5cm,
    minimum height=0.7cm, align=center, font=\small},
  root/.style={block, fill=blue!10, draw=blue!50},
  path/.style={block, fill=green!8, draw=green!50, dashed},
  key/.style={block, fill=yellow!15, draw=yellow!60},
  sa/.style={block, fill=gray!10, draw=gray!50},
  factor/.style={block, fill=red!10, draw=red!50},
  cred/.style={block, fill=orange!10, draw=orange!50},
  arr/.style={-{Stealth[length=5pt]}, thick},
  iso/.style={decorate, decoration={zigzag, amplitude=2pt, segment length=6pt},
    draw=gray, thick},
]
  \node[root] (rev) {$\REV_{\mathrm{root}}$\\{\scriptsize(Ephemeral)}};

  \node[path, below left=1.2cm and 1.4cm of rev] (apj)
    {$\AP_j$ \scriptsize($\Ctx_j$)};
  \node[path, below right=1.2cm and 1.4cm of rev] (apk)
    {$\AP_k$ \scriptsize($\Ctx_k$)};

  \node[key, below=0.8cm of apj] (keyj) {$\KeyAP[j]$};
  \node[key, below=0.8cm of apk] (keyk) {$\KeyAP[k]$};

  \node[sa, below=0.8cm of keyj] (saj) {$\SA_j$ {\scriptsize(Mnemonic)}};
  \node[sa, below=0.8cm of keyk] (sak) {$\SA_k$ {\scriptsize(Mnemonic)}};

  \node[cred, below left=0.8cm and 0.3cm of saj] (ucj) {\small$\UC_j$};
  \node[factor, below right=0.8cm and 0.3cm of saj] (afj)
    {\small$\AF_{j}$};
  \node[cred, below left=0.8cm and 0.3cm of sak] (uck) {\small$\UC_k$};
  \node[factor, below right=0.8cm and 0.3cm of sak] (afk)
    {\small$\AF_{k}$};

  \draw[arr] (rev) -- (apj);
  \draw[arr] (rev) -- (apk);
  \draw[arr] (apj) -- (keyj);
  \draw[arr] (apk) -- (keyk);
  \draw[arr] (keyj) -- (saj);
  \draw[arr] (keyk) -- (sak);
  \draw[arr] (ucj) -- (saj);
  \draw[arr] (afj) -- (saj);
  \draw[arr] (uck) -- (sak);
  \draw[arr] (afk) -- (sak);

  \draw[iso] ($(apj.east)+(0.3,0)$) -- ($(apk.west)+(-0.3,0)$)
    node[midway, above, font=\scriptsize\itshape] {Cryptographic Isolation};

  \node[font=\scriptsize\color{red}, right=0.15cm of afk]
    {$\times$ Destructible};
\end{tikzpicture}%
}
\caption{Isolation and redundancy of authorization paths.
Each path uses a unique context tuple and independent
authorization factors.  Destroying a factor revokes the
corresponding path without affecting the root or other paths.}
\label{fig:isolation}
\end{figure}

\section{Conditional Triggering and Activation}\label{sec:activation}

Administrative authorization factors are not persistently available.
Their release is conditioned on verification of predefined external
events.

\subsection{Trigger Conditions}\label{sec:conditions}

Supported condition types include, but are not limited to:
\begin{itemize}[leftmargin=2em]
  \item \textbf{Explicit User Consent:} Direct authorization by the
    asset owner (e.g., signed request).
  \item \textbf{Legal Events:} Legally recognized inheritance or
    succession events, verified by a certified entity (e.g., notary,
    probate court).
  \item \textbf{Temporal Conditions:} Time-based or jurisdiction
    deadlines, verifiable via trusted timestamps or on-chain
    time-locks.
  \item \textbf{Regulatory Actions:} Suspension or compliance-driven
    release by a supervisory authority.
\end{itemize}

\subsection{Condition Verification}\label{sec:condition-verification}

Condition verification may be instantiated via two complementary
mechanisms:

\paragraph{Trusted Third-Party (TTP) Verifiers.}
For legally binding conditions, certified entities (e.g., notaries,
regulatory bodies) act as verifiers, providing cryptographically
signed condition-fulfillment proofs to custodians.

\paragraph{Decentralized Oracles.}
For transparent, tamper-resistant conditions (e.g., time locks,
cross-chain events), decentralized oracle
networks~\cite{adler2018astraea} can replace TTPs, providing
condition verification without single points of failure.  Oracle
manipulation risks are mitigated via multi-source data aggregation.

\subsection{Activation Protocol}\label{sec:activation-protocol}

The activation protocol is formalized in \autoref{alg:activate}.

\begin{algorithm}[t]
\caption{$\mathsf{Activate}$: Condition-Triggered Path Activation}
\label{alg:activate}
\begin{algorithmic}[1]
\Require Sealed artifact $\SA_{\mathrm{root}}$; user credential
  $\UC$; released factors $\{\AF_i\}_{i=1}^{n}$; target context
  $\Ctx_j$; public parameters $\mathsf{params}$
\Ensure Control key $\KeyAP[j]$ or $\bot$

\Statex \textbf{Step 1: Factor Release}
\For{each custodian $\mathcal{C}_i$}
  \State Verify condition via $\mathcal{V}$
  \State $\AF_i \gets \ACEGF.\mathsf{Unseal}(\mathsf{params},
    \SA_{\mathcal{C}_i}, \Cred_{\mathcal{C}_i})$
    \Comment{Custodian releases factor}
  \State Transmit $\AF_i$ via secure channel (e.g., TLS~1.3 with
    certificate pinning)
\EndFor

\Statex
\Statex \textbf{Step 2: Composite Credential Reconstruction}
\State $\Cred_{\mathrm{composite}} \gets
  \mathsf{Argon2id}(\UC \concat \AF_1 \concat \cdots \concat \AF_n
  ;\; s;\; m, t, p)$
  \Comment{Salt $s$ read from $\SA$ metadata}

\Statex
\Statex \textbf{Step 3: REV Reconstruction and Key Derivation}
\State $\REV_{\mathrm{root}} \gets
  \ACEGF.\mathsf{Unseal}(\mathsf{params}, \SA_{\mathrm{root}},
  \Cred_{\mathrm{composite}})$
\If{$\REV_{\mathrm{root}} = \bot$}
  \State \Return $\bot$ \Comment{Credential mismatch}
\EndIf
\State $\KeyAP[j] \gets
  \ACEGF.\mathsf{Derive}(\REV_{\mathrm{root}}, \Ctx_j)$
\State $\mathsf{Zeroize}(\REV_{\mathrm{root}})$
  \Comment{Maintain ephemeral root property}
\State \Return $\KeyAP[j]$
\end{algorithmic}
\end{algorithm}

\subsection{Revocation Protocol}\label{sec:revocation-protocol}

Revocation is formalized in \autoref{alg:revoke}.

\begin{algorithm}[t]
\caption{$\mathsf{Revoke}$: Authorization-Bound Path Revocation}
\label{alg:revoke}
\begin{algorithmic}[1]
\Require Target authorization factor index $i$; custodian
  $\mathcal{C}_i$
\Ensure Path(s) depending on $\AF_i$ are permanently revoked

\State $\mathcal{C}_i$ performs secure erasure of $\AF_i$
  (e.g., zeroization of storage media)
\State \Comment{$\Cred_{\mathrm{composite}}$ is now permanently
  irreconstructible}
\State \Comment{$\REV_{\mathrm{root}}$ cannot be unsealed for
  path(s) requiring $\AF_i$}
\State Log destruction event for auditability
  (e.g., timestamped attestation)
\end{algorithmic}
\end{algorithm}

\subsection{Irreversibility of Activation}\label{sec:irreversibility}

Once activated, the authorization path remains usable as long as
$\Cred_{\mathrm{composite}}$ is complete.  To disable the path, a
custodian destroys their $\AF_i$ (e.g., secure erasure via
zeroization), rendering $\Cred_{\mathrm{composite}}$ permanently
incomplete.

\section{Security Analysis}\label{sec:security}

This section provides formal security analysis of \CTDAP.  We first
specify the canonical encoding used by $\mathsf{Derive}$, then state
three auxiliary lemmas that capture the individual cryptographic
advantages, and finally prove the main security theorems by composing
these lemmas with explicit advantage bounds.

\subsection{Canonical Context Encoding for \texorpdfstring{$\mathsf{Derive}$}{Derive}}
\label{sec:ctx-encoding}

To ensure unambiguous domain separation across implementations, we
require a \emph{canonical} byte-level encoding of the context tuple
passed to HKDF\@.  Specifically, for a context
$\Ctx_j = (\mathsf{AlgID},\, \mathsf{Domain},\, \mathsf{Index})$,
the $\mathsf{Derive}$ function computes:
\begin{align}\label{eq:derive-canonical}
  \mathsf{PRK} &= \mathrm{HKDF\text{-}Extract}(
    \mathsf{salt} = \texttt{"CT-DAP-v1"},\;
    \mathsf{IKM} = \REV_{\mathrm{root}}) \\
  \KeyAP[j] &= \mathrm{HKDF\text{-}Expand}(
    \mathsf{PRK},\;
    \mathsf{info} = \mathsf{Encode}(\Ctx_j),\;
    L)
\end{align}
where the info string is the length-prefixed concatenation:
\[
  \mathsf{Encode}(\Ctx_j) =
    \langle|\mathsf{AlgID}|\rangle_{2}
    \concat \mathsf{AlgID}
    \concat \langle|\mathsf{Domain}|\rangle_{2}
    \concat \mathsf{Domain}
    \concat \langle\mathsf{Index}\rangle_{4}
\]
with $\langle \cdot \rangle_k$ denoting big-endian encoding into $k$
bytes.  The fixed-width length prefixes ensure that no two distinct
tuples $(\mathsf{AlgID}, \mathsf{Domain}, \mathsf{Index}) \neq
(\mathsf{AlgID}', \mathsf{Domain}', \mathsf{Index}')$ produce the
same info string, preventing domain-separation collisions.  The static
salt \texttt{"CT-DAP-v1"} binds the derivation to this protocol
version; future versions increment the version tag.

\subsection{Auxiliary Lemmas}\label{sec:aux-lemmas}

We isolate the three cryptographic advantages that appear in the main
proofs.

\begin{lemma}[Sealing Indistinguishability]\label{lem:seal}
  For any two roots $\REV_0, \REV_1$ of equal length, let
  $\SA_b = \ACEGF.\mathsf{Seal}(\mathsf{params},
  \Cred_{\mathrm{composite}}, \REV_b)$ for $b \getsr \{0,1\}$.
  For any PPT $\Adv$ that does not know
  $\Cred_{\mathrm{composite}}$:
  \[
    \mathrm{Adv}_{\Adv}^{\mathrm{SEAL}}(\secparam)
    \;=\;
    \bigl|\Pr[\Adv(\SA_b) = b] - \tfrac{1}{2}\bigr|
    \;\leq\;
    \mathrm{Adv}_{\Adv,\,\mathrm{AES\text{-}GCM\text{-}SIV}}^{\mathrm{IND\text{-}CPA}}(\secparam)
  \]
  where the right-hand side is the IND-CPA advantage against
  AES-256-GCM-SIV~\cite{rfc8452}.
\end{lemma}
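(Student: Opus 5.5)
The plan is a direct reduction from any adversary $\Adv$ against the sealing game to an IND-CPA adversary $\mathcal{B}$ against AES-256-GCM-SIV. The first step is to make $\mathsf{Seal}$ explicit. In the \ACEGF instantiation, $\SA = \mathsf{AEAD.Enc}(K, N, \REV; \mathsf{ad})$, where the key $K$ is $\Cred_{\mathrm{composite}}$, or a fixed deterministic function of it such as an HKDF expansion. The nonce $N$ and the associated data $\mathsf{ad}$ are either derived from $\mathsf{params}$ or public. The BIP-39 mnemonic encoding applied in Step~3 of \autoref{alg:path-setup} is a public, efficiently invertible bijection, so it can be ignored: any distinguisher on the mnemonic is a distinguisher on the raw ciphertext.

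\textbf{The reduction $\mathcal{B}$.} $\mathcal{B}$ runs in the standard IND-CPA game, with the challenger holding a uniformly random key $K$. It proceeds as follows.
\begin{enumerate}
\item It receives $\REV_0,\REV_1$ (fixed, or output by $\Adv$ in a first stage), which have equal length by hypothesis.
\item It submits $(N, \REV_0, \REV_1, \mathsf{ad})$ as its challenge query and receives $C_b$.
\item It encodes $C_b$ as a mnemonic, hands it to $\Adv$ as $\SA_b$, and outputs whatever bit $\Adv$ outputs.
\end{enumerate}
If $\Adv$ also expects to see $\mathsf{params}$ or the salt $s$, $\mathcal{B}$ can produce them itself, because they are independent of $K$ given the modelling below. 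Since $\mathcal{B}$ perfectly simulates $\Adv$'s view, $\Pr[\mathcal{B}\text{ wins}] = \Pr[\Adv(\SA_b)=b]$. The advantages are therefore equal, which gives the stated inequality (abusing notation by writing $\Adv$ for $\mathcal{B}$ on the right-hand side, as the statement does).

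\textbf{Main obstacle.} The delicate step is justifying that the AEAD key used in $\mathsf{Seal}$ is distributed like a uniform IND-CPA key from $\Adv$'s point of view. In reality the key is $\mathsf{Argon2id}(\UC\concat\AF_1\concat\cdots\concat\AF_n; s)$ with $s$ public, and $\Adv$ ``does not know'' the composite credential only in the sense that at least one component is missing (\autoref{asm:adversary}). My plan is to interpret the hypothesis of the lemma as saying that $\Cred_{\mathrm{composite}}$ is computationally indistinguishable from uniform given $\Adv$'s view. This is justified by modelling Argon2id as a random oracle and assuming the missing component has at least $128$ bits of min-entropy, as in \autoref{sec:cred-composition}. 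If one wanted to make this hybrid explicit, it would add a term of order $q/2^{128}$, where $q$ is the number of Argon2id oracle queries; the lemma as stated absorbs this into the premise. A secondary point is that GCM-SIV is deterministic, so a single challenge under one key is exactly what IND-CPA (or its deterministic-AE analogue) covers: no nonce reuse across different messages occurs within the game. This is the reason the bound needs only one challenge query.
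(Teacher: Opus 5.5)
Your proposal is correct and is essentially the paper's proof. In both, $\mathcal{B}$ implicitly identifies the IND-CPA challenge key with $\Cred_{\mathrm{composite}}$, submits $(\REV_0,\REV_1)$ as its single challenge pair, forwards $c^*$ to $\Adv$ as $\SA^*$, and outputs $\Adv$'s guess, so the advantages coincide; the paper's step ``$\mathcal{B}$ sets $K = \Cred_{\mathrm{composite}}$'' silently assumes the Argon2id output looks uniform to $\Adv$, which you make explicit (random oracle plus a high-min-entropy missing component, with the $q/2^{128}$-type term absorbed into the lemma's hypothesis), along with the mnemonic encoding and the simulation of $\mathsf{params}$ and $s$.
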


\begin{proof}
  We reduce to the IND-CPA security of AES-256-GCM-SIV.

  \emph{Reduction.}
  Given a PPT distinguisher $\Adv$ for the sealing experiment,
  construct an IND-CPA adversary $\mathcal{B}$ as follows:
  \begin{enumerate}[label=(\roman*), leftmargin=2em]
    \item $\mathcal{B}$ receives the AEAD public parameters and
      an encryption oracle $\mathsf{Enc}_K(\cdot)$ under an unknown
      key $K$ chosen uniformly by the IND-CPA challenger.
    \item $\mathcal{B}$ sets $K = \Cred_{\mathrm{composite}}$
      (the key is unknown to $\mathcal{B}$).  It chooses two
      equal-length roots $\REV_0, \REV_1$ and submits them as
      challenge plaintexts $(m_0, m_1) = (\REV_0, \REV_1)$.
    \item The IND-CPA challenger returns $c^* = \mathsf{Enc}_K(
      \REV_b)$ for a uniformly random $b \getsr \{0,1\}$.
    \item $\mathcal{B}$ sets $\SA^* \gets c^*$ and forwards it
      to $\Adv$.
    \item $\Adv$ outputs a guess $b'$.  $\mathcal{B}$ outputs $b'$.
  \end{enumerate}

  \emph{Analysis.}
  By construction, the view of $\Adv$ when interacting with
  $\mathcal{B}$ is identical to the sealing experiment: $\SA^*$
  is the AEAD encryption of $\REV_b$ under
  $\Cred_{\mathrm{composite}}$.  Therefore
  $\Pr[\mathcal{B} \text{ wins}] = \Pr[\Adv(\SA_b) = b]$, and
  \[
    \mathrm{Adv}_{\Adv}^{\mathrm{SEAL}}(\secparam)
    = \bigl|\Pr[\Adv(\SA_b) = b] - \tfrac{1}{2}\bigr|
    = \mathrm{Adv}_{\mathcal{B},\,\mathrm{AES\text{-}GCM\text{-}SIV}}^{\mathrm{IND\text{-}CPA}}(\secparam)
  \]
  which is $\negl(\secparam)$ by \autoref{asm:crypto}\ref{asm:aead}.
  \qedhere
\end{proof}

\begin{lemma}[Credential Brute-Force Resistance]\label{lem:mh}
  Let the composite credential be
  \[
    \Cred_{\mathrm{composite}} = \mathsf{Argon2id}\!\bigl(
    \UC \!\concat\! \AF_1 \!\concat\! \cdots
    \!\concat\! \AF_n;\; s;\; m,t,p\bigr)
  \]
  where at least one component has min-entropy $\geq \kappa$ bits
  and $s$ is a 128-bit per-path random salt.  For any adversary
  $\Adv$ with at-most-$Q$ queries to an Argon2id evaluation oracle
  (each costing $\geq m$ bytes of memory):
  \[
    \mathrm{Adv}_{\Adv}^{\mathrm{MH}}(\secparam)
    \;\leq\; \frac{Q}{2^{\kappa}}
  \]
  The per-path salt ensures that work performed against target
  $(s_j, \SA_j)$ provides no non-trivial advantage against a
  distinct target $(s_k, \SA_k)$ with $s_k \neq s_j$: the
  adversary cannot amortize queries across targets and must
  restart with an independent budget of $Q$ evaluations per
  target.
\end{lemma}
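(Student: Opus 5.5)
We will prove the bound by modeling Argon2id as a random oracle $\mathcal{H}$ whose every evaluation costs at least $m$ bytes of memory; this is the idealization behind \autoref{asm:crypto}\ref{asm:mh}. The adversary's only way to get the value $\mathcal{H}(\UC \concat \AF_1 \concat \cdots \concat \AF_n, s)$ is to query the oracle, and it has at most $Q$ queries. We define $\mathrm{Adv}^{\mathrm{MH}}_\Adv$ as the probability that $\Adv$ outputs, or queries, the correct password--salt pair $(x^*, s)$, where $x^*$ is the true password string. This is the event that makes the composite credential available to it.

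\textbf{Step 1 (reduce to guessing the secret component).} Fix the component $X$ with $H_\infty(X) \ge \kappa$. Give the adversary, for free, every other component of the concatenation. This can only increase its success. Because of the null-byte delimiter from \autoref{sec:cred-composition}, the map from component tuples to password strings is injective. So a query hits $x^*$ exactly when its $X$-slot equals the true value of $X$.

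\textbf{Step 2 (per-query bound and union bound).} Consider the transcript of queries made before any query hits $x^*$. Each oracle answer to a wrong point is a fresh uniform string, independent of $X$. The sealed artifact $\SA$ also reveals essentially nothing about $X$ beyond what a correct guess would confirm: by INT-CTXT (\autoref{asm:crypto}\ref{asm:aead}), a wrong credential only yields $\bot$, and by \autoref{lem:seal} the artifact itself is hiding. It follows that the $q$-th query succeeds with probability at most $\max_x \Pr[X = x \mid \text{view}] \le 2^{-\kappa}$, up to a negligible AEAD term. A union bound over $Q$ queries gives $Q/2^{\kappa}$, with the AEAD term absorbed into $\negl(\secparam)$ or stated as an additive term.

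\textbf{Step 3 (no amortization across salts).} Oracle points with different salts are distinct inputs to the random oracle. Queries made with salt $s_j$ therefore give answers that are independent of $\mathcal{H}(\cdot, s_k)$ for $s_k \neq s_j$. Any budget spent on target $j$ contributes zero to the Step 2 sum for target $k$, so the adversary needs a fresh budget of $Q$ evaluations per target. Since $s$ is 128-bit uniform, salts collide across $N$ targets with probability at most $N^2/2^{129}$, which we add to the bound. The memory-hardness cost, $\ge m$ bytes per query, is what translates a resource budget into the query count $Q$. We will treat this as an assumption and not prove it.

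The main obstacle is Step 2's claim that the adversary's view stays independent of $X$ until it guesses correctly. This needs care because $\SA$ can act as a verification oracle: the adversary tests each guess by unsealing. That is exactly why each test must cost one Argon2id query. We will handle it with a standard ``bad event'' game hop. Replace $\mathcal{H}(x^*, s)$ by an independent uniform key and argue the two games are identical until $x^*$ is queried. In the modified game the key is independent of everything the adversary sees other than $\SA$, and $\SA$ is then handled by \autoref{lem:seal}.
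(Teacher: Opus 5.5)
Your proposal is correct and follows the paper's proof. Both model Argon2id as a random oracle, bound each query's chance of hitting the withheld component by $2^{-\kappa}$, union-bound over $Q$ queries, and use independence of oracle outputs under distinct salts for the no-amortization claim; your identical-until-bad hop additionally justifies a point the paper only asserts, namely that $\SA$, usable as a verification oracle, cannot raise the per-query probability.

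Two refinements. First, in that hybrid $\SA$ is encrypted under a key independent of $X$, so it is perfectly independent of $X$. Neither \autoref{lem:seal} (which concerns plaintext hiding, not the key) nor any AEAD term is needed, and you get exactly $Q/2^{\kappa}$ as stated. Second, like the paper's argument, your Step~3 implicitly assumes the withheld secrets of different targets are independent: otherwise, recovering the secret through queries against $\SA_j$ breaks $\SA_k$ with one additional query.
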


\begin{proof}
  We model Argon2id as a random oracle $\mathcal{O}$ that, on
  each fresh query $(x, s, m, t, p)$, returns a uniformly random
  output in $\{0,1\}^{256}$.

  \emph{Single-target bound.}
  The adversary $\Adv$ knows at most $n-1$ of the $n+1$
  credential components $\{\UC, \AF_1, \ldots, \AF_n\}$.  The
  missing component has min-entropy $\geq \kappa$, so each
  query to $\mathcal{O}$ with a guessed value matches the true
  $\Cred_{\mathrm{composite}}$ with probability at most
  $2^{-\kappa}$.  Over $Q$ queries, a union bound gives
  \[
    \Pr[\exists\, i \leq Q :
      \mathcal{O}(x_i, s, m, t, p)
      = \Cred_{\mathrm{composite}}]
    \;\leq\; \frac{Q}{2^{\kappa}}
  \]

  \emph{Memory-hardness cost.}
  Each Argon2id evaluation requires $\geq m$ bytes of working
  memory~\cite{argon2,rfc9106}.  An adversary with total memory
  $M$ can execute at most $\lfloor M/m \rfloor$ parallel
  evaluations per time step, bounding the achievable query rate
  and making large $Q$ resource-prohibitive.

  \emph{Cross-target independence.}
  The random salt $s_j$ is drawn independently for each path.
  Since $\mathcal{O}$ is modelled as a random oracle, queries
  with salt $s_j$ and salt $s_k \neq s_j$ produce independent
  outputs.  Hence work performed against target $(s_j, \SA_j)$
  yields no non-trivial advantage against $(s_k, \SA_k)$: the
  adversary must spend an independent budget of $Q$ evaluations
  per target.
  \qedhere
\end{proof}

\begin{remark}[On the random oracle model for Argon2id]
\label{rem:argon2-rom}
  The proof above models Argon2id as a random oracle, which is a
  standard idealization shared by the Argon2 specification
  itself~\cite{argon2} and RFC~9106~\cite{rfc9106}.  In practice,
  Argon2id is \emph{not} a random oracle; its security rests on
  \emph{memory-hardness}---the property that any algorithm
  computing the function faster than the honest evaluator must use
  comparable memory.  Formalizing memory-hardness in the standard
  model requires graph-pebbling-based
  frameworks~\cite{alwen2017scrypt}, which yield asymptotically
  weaker but more realistic bounds of the form
  $\mathrm{cost}(\Adv) \geq \Omega(m \cdot t)$ per evaluation.

  Our ROM-based bound $Q / 2^{\kappa}$ should therefore be
  interpreted as an \emph{upper bound on the online query
  advantage}, complemented by the economic argument that each
  query costs $\geq m$ bytes of memory.  A full standard-model
  analysis of Argon2id's memory-hardness is an independent
  research problem beyond the scope of this work; we refer the
  reader to Alwen and Blocki~\cite{alwen2017scrypt} for the
  state-of-the-art treatment.
\end{remark}

\begin{lemma}[PRF Security of $\mathsf{Derive}$]\label{lem:prf}
  Let $\mathsf{Derive}$ be instantiated as HKDF-SHA256 per
  \autoref{sec:ctx-encoding}.  For any PPT $\Adv$ making at most
  $q$ queries with distinct canonical info strings:
  \[
    \mathrm{Adv}_{\Adv,\,\mathrm{HKDF}}^{\mathrm{PRF}}(\secparam)
    \;\leq\; q \cdot \mathrm{Adv}_{\mathrm{HMAC\text{-}SHA256}}^{\mathrm{PRF}}(\secparam)
  \]
  In the random oracle model for SHA-256, the right-hand side is
  $\negl(\secparam)$~\cite{rfc5869}.
\end{lemma}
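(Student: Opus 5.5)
The plan is to reduce the PRF security of $\mathsf{Derive}$ to that of HMAC-SHA256 in two stages, mirroring the two-step structure of HKDF in \autoref{sec:ctx-encoding}: first Extract, then Expand.

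\emph{Extract stage.} Here the key is the fixed public salt \texttt{"CT-DAP-v1"} and the input is $\REV_{\mathrm{root}}$. Since the salt is public, I would not treat this as a keyed PRF. Instead I would use the random-oracle view of SHA-256: $\REV_{\mathrm{root}}$ is a uniform 256-bit value that never leaves memory, so $\mathsf{PRK} = \mathrm{HMAC}(\texttt{"CT-DAP-v1"}, \REV_{\mathrm{root}})$ is indistinguishable from a uniform 256-bit key. The loss at this stage is the probability that $\Adv$ queries the oracle on $\REV_{\mathrm{root}}$, which is at most $q_H/2^{256}$ for $q_H$ hash queries. I would fold this term into the negligible bound, or equivalently into one HMAC-PRF advantage term.

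\emph{Expand stage.} With $\mathsf{PRK}$ replaced by a uniform key $K$, each output block is $T(i) = \mathrm{HMAC}(K, T(i-1) \concat \mathsf{Encode}(\Ctx) \concat i)$. I would run a hybrid over the $q$ distinct queries. In hybrid $h$, the first $h$ queries are answered with truly random strings and the remaining ones with real HKDF-Expand outputs under $K$. A distinguisher between hybrids $h-1$ and $h$ yields an HMAC-PRF adversary $\mathcal{B}_h$. $\mathcal{B}_h$ samples its own key for the later queries, answers the earlier ones randomly, and embeds its HMAC oracle at query $h$. Summing over the $q$ hybrids gives the factor $q \cdot \mathrm{Adv}^{\mathrm{PRF}}_{\mathrm{HMAC\text{-}SHA256}}$. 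Since $L \le 256$ bits suffices for control keys, each query needs a single block, which avoids extra block-level hybrids.

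\emph{Main obstacle.} The key step is justifying that distinct contexts give distinct HMAC inputs, so that oracle answers for different queries really are independent. This is where the canonical encoding is used. The fixed-width prefixes $\langle|\mathsf{AlgID}|\rangle_2$ and $\langle|\mathsf{Domain}|\rangle_2$, together with the fixed 4-byte $\mathsf{Index}$, make $\mathsf{Encode}$ injective and uniquely parseable: one reads a length, then that many bytes, then the next length, and so on. Moreover, every first-block input has the form $\mathsf{Encode}(\Ctx) \concat \texttt{0x01}$ with an empty $T(0)$, so distinct contexts never produce colliding first-block inputs.

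Finally, I would note that in the random oracle model the HMAC PRF advantage is at most roughly $(q+q_H)^2/2^{256}$, which is negligible. This gives the closing clause of \autoref{lem:prf} and, combined with the Extract term, the overall bound.
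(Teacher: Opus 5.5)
\textbf{Gap in the hybrid reduction.} Your outline follows the paper's proof: a $q$-step hybrid over the distinct info strings, with injectivity of $\mathsf{Encode}$ doing the work. You also add an Extract step in the ROM that the paper skips by simply fixing $\mathsf{PRK}$. That addition is sound, and it is needed, because $\mathsf{Derive}$ is keyed by $\REV_{\mathrm{root}}$, not by $\mathsf{PRK}$. The step that fails is the per-hop reduction $\mathcal{B}_h$ you spell out; the paper asserts the same transitions without constructing the reduction. In your hybrids every real answer is computed under the \emph{same} key $K$, so in $H_{h-1}$ the answers to queries $h,\ldots,q$ are correlated through $K$. Suppose $\mathcal{B}_h$ embeds its oracle $\mathrm{HMAC}_K$ at query $h$ but answers queries $h+1,\ldots,q$ under its own fresh key $K'$. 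In the real branch, $\Adv$ then sees query $h$ under one key and the later queries under an independent key. That distribution is not $H_{h-1}$; only the random branch matches $H_h$. Nor can $\mathcal{B}_h$ forward the later queries to its oracle: in the random branch they would then be random too, and it would be simulating $H_q$ instead of $H_h$. So the adjacent-hybrid bounds you sum are never established. The ``one PRF output per hop'' template applies to $q$ independent keys, not to one key evaluated at $q$ points.

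\textbf{How to repair it.} Drop the hybrid and use a single adversary $\mathcal{B}$. For every query, it forwards the first-block input $\mathsf{Encode}(\Ctx)\concat\texttt{0x01}$ to its one HMAC oracle and truncates the answer to $L$ bits. With oracle $\mathrm{HMAC}_K$, $\Adv$ sees exactly HKDF-Expand under $K$. With a random function, the inputs are pairwise distinct by injectivity of $\mathsf{Encode}$, so the answers are independent and uniform, which is exactly the ideal world. This is where your ``main obstacle'' observation is actually used. It gives an Expand loss of $\mathrm{Adv}^{\mathrm{PRF}}_{\mathrm{HMAC\text{-}SHA256}}$ with no factor $q$, so the lemma's bound holds a fortiori. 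For $L>256$ the same single reduction works, up to a birthday term for the chained blocks $T(i-1)$. Finally, keep the Extract loss $q_H/2^{256}$ as an explicit additive term rather than calling it ``equivalent'' to an HMAC-PRF term. The total $q_H/2^{256}+\mathrm{Adv}^{\mathrm{PRF}}_{\mathrm{HMAC\text{-}SHA256}}$ is negligible, which is what the lemma's ROM clause needs. The paper's bound omits such a term only because it implicitly takes $\mathsf{PRK}$ as the key.
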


\begin{proof}
  We reduce to the PRF security of HMAC-SHA256.

  \emph{Structure of HKDF.}
  HKDF consists of two phases~\cite{rfc5869}:
  $\mathsf{PRK} = \mathrm{HKDF\text{-}Extract}(\mathsf{salt},
  \mathsf{IKM})$ and
  $\mathsf{OKM} = \mathrm{HKDF\text{-}Expand}(\mathsf{PRK},
  \mathsf{info}, L)$.
  For a fixed $\mathsf{PRK}$, $\mathrm{HKDF\text{-}Expand}$ is
  an iterated application of $\mathrm{HMAC\text{-}SHA256}$ keyed
  by $\mathsf{PRK}$.  It is therefore a PRF from the info space
  to $\{0,1\}^L$, with advantage bounded by the PRF advantage of
  HMAC-SHA256.

  \emph{Hybrid argument.}
  Consider a sequence of $q+1$ hybrid games
  $H_0, H_1, \ldots, H_q$.  In $H_0$, all $q$ outputs
  $\KeyAP[1], \ldots, \KeyAP[q]$ are computed honestly via
  HKDF-Expand.  In $H_i$, the first $i$ outputs are replaced by
  independent uniform random strings, while the remaining
  $q - i$ are computed honestly.  In $H_q$, all outputs are
  uniformly random.

  The canonical encoding (\autoref{sec:ctx-encoding}) ensures
  that distinct contexts $\Ctx_j \neq \Ctx_k$ produce distinct
  info strings $\mathsf{Encode}(\Ctx_j) \neq
  \mathsf{Encode}(\Ctx_k)$.  Therefore each transition
  $H_{i-1} \to H_i$ replaces one PRF output on a fresh input
  with a random string, which is indistinguishable by
  $\mathrm{Adv}_{\mathrm{HMAC\text{-}SHA256}}^{\mathrm{PRF}}$.
  Summing over $q$ transitions:
  \[
    \mathrm{Adv}_{\Adv,\,\mathrm{HKDF}}^{\mathrm{PRF}}(\secparam)
    \;\leq\;
    q \cdot
    \mathrm{Adv}_{\mathrm{HMAC\text{-}SHA256}}^{\mathrm{PRF}}(\secparam)
  \]
  In the random oracle model for SHA-256, the right-hand side is
  $\negl(\secparam)$ for polynomial $q$.
  \qedhere
\end{proof}

\subsection{Unauthorized Control Resistance}\label{sec:ucr}

\begin{definition}[UCR Game]\label{def:ucr-game}
  The \emph{Unauthorized Control Resistance} game
  $\mathsf{Game}_{\mathrm{UCR}}^{\Adv}(\secparam)$ is parameterized
  by a \emph{challenge context} $\Ctx^*$ chosen uniformly from the
  set of valid path contexts $\{\Ctx_1, \ldots, \Ctx_m\}$, and
  proceeds as follows:
  \begin{enumerate}[label=\arabic*., leftmargin=2em]
    \item \textbf{Setup.}
      The challenger runs $\mathsf{PathSetup}$ to generate
      $(\SA_{\mathrm{root}}, s, \mathsf{params})$ with credential
      components $(\UC, \AF_1, \ldots, \AF_n)$ and path contexts
      $\{\Ctx_1, \ldots, \Ctx_m\}$.  The challenger computes
      $\KeyAP[^*] \!=\!
       \mathsf{Derive}(\REV_{\mathrm{root}}, \Ctx^*)$.
    \item \textbf{Corruption.}
      $\Adv$ chooses a proper subset
      $S \subsetneq \{\UC, \AF_1, \ldots, \AF_n\}$
      (i.e., at least one component is withheld).
      The challenger reveals all components in~$S$.
    \item \textbf{Public information.}
      $\Adv$ also receives $\SA_{\mathrm{root}}$, $\mathsf{params}$,
      $s$, the set of context labels $\{\Ctx_j\}_{j=1}^{m}$, and
      oracle access to $\mathsf{Derive}(\REV_{\mathrm{root}},
      \cdot)$ for \emph{non-challenge} contexts
      $\Ctx_j \neq \Ctx^*$ (modelling legitimate key use on other
      paths).
    \item \textbf{Output.}
      $\Adv$ outputs a candidate key $K^*$.
    \item \textbf{Win condition.}
      $\Adv$ wins if $K^* = \KeyAP[^*]$.
  \end{enumerate}
\end{definition}

\begin{theorem}[Unauthorized Control Resistance]\label{thm:ucr}
  Under \autoref{asm:crypto}, for any PPT adversary $\Adv$:
  \[
    \Pr\bigl[\mathsf{Game}_{\mathrm{UCR}}^{\Adv}(\secparam) = 1\bigr]
    \;\leq\;
    \mathrm{Adv}^{\mathrm{SEAL}}(\secparam)
    \;+\; \mathrm{Adv}^{\mathrm{MH}}(\secparam)
    \;+\; \mathrm{Adv}^{\mathrm{PRF}}(\secparam)
    \;\leq\; \negl(\secparam)
  \]
  where the three terms correspond to \autoref{lem:seal},
  \autoref{lem:mh}, and \autoref{lem:prf} respectively.
\end{theorem}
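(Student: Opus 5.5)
We argue by a sequence of games $G_0, G_1, G_2, G_3$, starting from $G_0 = \mathsf{Game}_{\mathrm{UCR}}^{\Adv}(\secparam)$ and ending in a game where $\Adv$ can win only by guessing a uniformly random key. Each hop is bounded by one of \autoref{lem:mh}, \autoref{lem:seal}, \autoref{lem:prf}. The three lemma terms then add up to the stated bound, plus a final guessing term $2^{-L}$ that is absorbed into $\negl(\secparam)$.

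\textbf{Hop $G_0 \to G_1$ (credential replaced).} Since $S$ is a proper subset, at least one component $x^\dagger \notin S$ is withheld. We assume it has min-entropy $\geq \kappa$, as in \autoref{lem:mh}. In $G_1$, $\SA_{\mathrm{root}}$ is sealed under a fresh uniform key $K$ rather than $\mathsf{Argon2id}(\UC\concat\AF_1\concat\cdots;\,s)$. We model Argon2id as a random oracle. Then $\Adv$ can tell $G_0$ from $G_1$ only if one of its oracle queries has the salt $s$ and the correct password, i.e.\ it guesses $x^\dagger$. By \autoref{lem:mh} this happens with probability at most $Q/2^{\kappa} = \mathrm{Adv}^{\mathrm{MH}}$.

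\textbf{Hop $G_1 \to G_2$ (sealed root replaced).} In $G_2$, $\SA_{\mathrm{root}}$ is replaced by a seal of an independent dummy root $\REV'$, while $\mathsf{Derive}$ is still keyed by the real $\REV_{\mathrm{root}}$. Because $K$ is now uniform and unknown to $\Adv$, this is exactly the sealing experiment of \autoref{lem:seal} with $(\REV_0,\REV_1) = (\REV_{\mathrm{root}},\REV')$. The reduction $\mathcal{B}$ samples $\REV_{\mathrm{root}}$ itself, so it can answer the $\mathsf{Derive}$ oracle on non-challenge contexts and can compute $\KeyAP[^*]$ to check whether $\Adv$ wins. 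The win probability therefore changes by at most $\mathrm{Adv}^{\mathrm{SEAL}}$.

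\textbf{Hop $G_2 \to G_3$ (derived keys replaced).} In $G_2$, $\REV_{\mathrm{root}}$ appears only as the HKDF input. We treat $\mathsf{PRK}$ as a uniform HMAC key, which follows from Extract in the ROM on a 256-bit uniform input. In $G_3$, all $\mathsf{Derive}$ outputs are replaced by a random function. The canonical encoding of \autoref{sec:ctx-encoding} makes $\mathsf{Encode}(\Ctx^*)$ distinct from every queried info string. So \autoref{lem:prf} bounds this hop by $\mathrm{Adv}^{\mathrm{PRF}}$. In $G_3$, $\KeyAP[^*]$ is uniform in $\{0,1\}^L$ and independent of $\Adv$'s view, so $\Adv$ wins with probability at most $2^{-L}$.

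\textbf{Main obstacle.} The delicate step is the first hop. $\Adv$ holds $\SA_{\mathrm{root}}$, so it can test candidate credentials offline through trial unsealing, which INT-CTXT makes reliable (\autoref{asm:crypto}\ref{asm:aead}). The bound must therefore charge every Argon2id query to \autoref{lem:mh}, which is why $Q$, rather than a constant, appears in $\mathrm{Adv}^{\mathrm{MH}}$.

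We also have to make sure the IND-CPA reduction does not need the credential components in $S$. It does not: after $G_1$, $K$ is independent of those components, so $\mathcal{B}$ can sample and reveal them itself.

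Finally, the first inequality treats $\mathrm{Adv}^{\mathrm{MH}}$ as negligible. That requires $Q$ polynomial and $\kappa = \omega(\log\secparam)$, for example $\kappa \geq 128$. We make this explicit when writing the last step.
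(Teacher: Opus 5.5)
Your proof is correct and rests on the same three lemmas as the paper's, but you order the hops differently, and the difference is substantive.

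The paper does the AEAD hop first, replacing $\SA_{\mathrm{root}}$ while the sealing key is still $\mathsf{Argon2id}(\UC \concat \AF_1 \concat \cdots \concat \AF_n; s)$. It then invokes \autoref{lem:mh} as a second hop, at a point where the sealed plaintext is already random, so unsealing no longer helps. Finally, it conditions on $\Adv$ not having recovered $\REV_{\mathrm{root}}$ and asserts $\Pr[\text{G3}] \le \mathrm{Adv}^{\mathrm{PRF}}$.

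In that order the IND-CPA reduction (``set $K = \Cred_{\mathrm{composite}}$'') is not justified. $\Adv$ knows all but one password component and may query Argon2id at salt $s$. The reduction therefore cannot answer the query on the true password consistently with the challenger's hidden key.

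Your order fixes this. First, an identical-until-bad hop swaps the Argon2id output for an independent uniform $K$, at cost $Q/2^{\kappa}$. Then IND-CPA is applied with a genuinely independent key, where $\mathcal{B}$ itself samples $\REV_{\mathrm{root}}$, the components in $S$, and the Argon2id oracle. Last comes a real PRF hop followed by an explicit guessing term.

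What your version costs is only bookkeeping, made visible. You obtain the statement's first inequality only up to an additive $2^{-L}$. Strictly, you also get $2\,\mathrm{Adv}^{\mathrm{SEAL}}$ rather than $\mathrm{Adv}^{\mathrm{SEAL}}$, because \autoref{lem:seal} measures $|\Pr[b'=b]-\tfrac12|$ while your hop needs the distinguishing gap. The paper's looser accounting hides both. Neither affects the $\negl(\secparam)$ conclusion under your explicit conditions:
\begin{itemize}
\item $Q$ is polynomial;
\item $\kappa = \omega(\log\secparam)$;
\item the min-entropy is conditional on the revealed components, and holds for every component, since $\Adv$ chooses which one to withhold.
\end{itemize}
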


\begin{proof}
  We proceed by a sequence of games, tracking the advantage
  introduced at each transition.

  \textbf{Game~0} (Real game).
  The real UCR game as defined above.

  \textbf{Game~1} (AEAD hop).
  Replace $\SA_{\mathrm{root}}$ with an encryption of a uniformly
  random value $\REV^*$ of the same length.  By \autoref{lem:seal}:
  \[
    \bigl|\Pr[\text{G1}] - \Pr[\text{G0}]\bigr|
    \;\leq\;
    \mathrm{Adv}^{\mathrm{SEAL}}(\secparam)
    \;\leq\;
    \mathrm{Adv}^{\mathrm{IND\text{-}CPA}}_{\mathrm{AES\text{-}GCM\text{-}SIV}}(\secparam)
  \]

  \textbf{Game~2} (Memory-hardness hop).
  In Game~1, $\Adv$ must recover the correct
  $\Cred_{\mathrm{composite}}$ to unseal.  Since $\Adv$ holds at
  most $n-1$ components and the missing component has min-entropy
  $\geq\kappa$, by \autoref{lem:mh}:
  \[
    \bigl|\Pr[\text{G2}] - \Pr[\text{G1}]\bigr|
    \;\leq\;
    \mathrm{Adv}^{\mathrm{MH}}(\secparam)
    \;\leq\; \frac{Q}{2^{\kappa}}
  \]

  \textbf{Game~3} (PRF hop).
  Condition on the event that $\Adv$ has \emph{not} recovered
  $\REV_{\mathrm{root}}$ (the complementary event is bounded by
  $\mathrm{Adv}^{\mathrm{SEAL}} + \mathrm{Adv}^{\mathrm{MH}}$
  from Games~1--2).
  Since $\Adv$ does not hold $\REV_{\mathrm{root}}$, the challenge
  key $\KeyAP[^*] = \mathsf{Derive}(\REV_{\mathrm{root}}, \Ctx^*)$
  is the output of a PRF keyed by the unknown
  $\REV_{\mathrm{root}}$.  By
  \autoref{lem:prf}:
  \[
    \Pr[\text{G3}] \;\leq\; \mathrm{Adv}^{\mathrm{PRF}}(\secparam)
  \]

  Summing all transitions:
  $\Pr[\mathsf{Game}_{\mathrm{UCR}}^{\Adv} = 1]
    \leq \mathrm{Adv}^{\mathrm{SEAL}} + \mathrm{Adv}^{\mathrm{MH}}
         + \mathrm{Adv}^{\mathrm{PRF}}
    \leq \negl(\secparam)$.
  \qedhere
\end{proof}

\subsection{Cross-Path Isolation}\label{sec:cpi}

\begin{definition}[CPI Game]\label{def:cpi-game}
  The \emph{Cross-Path Isolation} game
  $\mathsf{Game}_{\mathrm{CPI}}^{\Adv}(\secparam)$ proceeds as
  follows:
  \begin{enumerate}[label=\arabic*., leftmargin=2em]
    \item The challenger generates $\REV_{\mathrm{root}}$ and derives
      keys $\KeyAP[j]$ and $\KeyAP[k]$ for distinct contexts
      $\Ctx_j \neq \Ctx_k$ using the canonical encoding of
      \autoref{sec:ctx-encoding}.
    \item The adversary receives $\KeyAP[j]$ (i.e., path $j$ is
      compromised or activated).
    \item The challenger samples $b \getsr \{0,1\}$.  If $b=0$,
      it sends $\KeyAP[k]$; if $b=1$, it sends a uniformly random
      key $K^*$ of the same length.
    \item $\Adv$ outputs a guess $b'$.
    \item $\Adv$ wins if $b' = b$.
  \end{enumerate}
\end{definition}

\begin{theorem}[Cross-Path Isolation]\label{thm:cpi}
  Under \autoref{asm:crypto}(ii), for any PPT adversary $\Adv$:
  \[
    \bigl|\Pr[b' = b] - \tfrac{1}{2}\bigr|
    \;\leq\;
    \mathrm{Adv}^{\mathrm{PRF}}_{\mathrm{HKDF}}(\secparam)
    \;\leq\; \negl(\secparam)
  \]
\end{theorem}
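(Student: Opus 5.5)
We reduce directly to the PRF security of $\mathsf{Derive}$ established in \autoref{lem:prf}, using the structure $\KeyAP[\ell] = \mathrm{HKDF\text{-}Expand}(\mathsf{PRK}, \mathsf{Encode}(\Ctx_\ell), L)$ with $\mathsf{PRK} = \mathrm{HKDF\text{-}Extract}(\texttt{"CT-DAP-v1"}, \REV_{\mathrm{root}})$. Since $\REV_{\mathrm{root}}$ is a uniform 256-bit value that the adversary never sees in the CPI game, the extract step yields a $\mathsf{PRK}$ that, under \autoref{asm:crypto}(ii) (HKDF as a PRF/random oracle), is indistinguishable from a uniform HMAC key. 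We can absorb this into the $\mathrm{Adv}^{\mathrm{PRF}}_{\mathrm{HKDF}}$ term, exactly as \autoref{lem:prf} does. This leaves Expand keyed by an unknown key, evaluated on the two info strings $x_j = \mathsf{Encode}(\Ctx_j)$ and $x_k = \mathsf{Encode}(\Ctx_k)$.

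The first step is to establish that $x_j \neq x_k$ whenever $\Ctx_j \neq \Ctx_k$. This is the injectivity of the length-prefixed canonical encoding in \autoref{sec:ctx-encoding}. We parse the info string left to right: the 2-byte prefix fixes $|\mathsf{AlgID}|$, which determines $\mathsf{AlgID}$; the next prefix determines $\mathsf{Domain}$; the final 4 bytes give $\mathsf{Index}$. Hence the two PRF queries are on distinct inputs.

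Next we build a PRF distinguisher $\mathcal{B}$ from a CPI adversary $\Adv$. $\mathcal{B}$ has oracle access to either $\mathrm{HKDF}(\REV_{\mathrm{root}}, \cdot)$ for a hidden uniform root, or a truly random function $F$. It queries the oracle on $x_j$ to obtain $K_j$ and on $x_k$ to obtain $K_k$. It then flips its own bit $b$, hands $\Adv$ the key $K_j$ together with either $K_k$ (if $b=0$) or a fresh uniform $K^*$ (if $b=1$), and outputs $1$ iff $\Adv$ guesses $b$ correctly. The two cases of $\mathcal{B}$'s oracle behave as follows.
\begin{itemize}
  \item \emph{Real oracle.} $\Adv$'s view is exactly $\mathsf{Game}_{\mathrm{CPI}}^{\Adv}$, so $\mathcal{B}$ outputs $1$ with probability $\Pr[b'=b]$.
  \item \emph{Random oracle.} Because $x_j \neq x_k$, the value $F(x_k)$ is uniform and independent of $F(x_j)$. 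So whether $b=0$ or $b=1$, $\Adv$ receives $K_j$ together with an independent uniform string. Its view is therefore independent of $b$, and it wins with probability exactly $\tfrac12$.
\end{itemize}
Subtracting the two cases gives $|\Pr[b'=b]-\tfrac12| \le \mathrm{Adv}^{\mathrm{PRF}}_{\mathrm{HKDF}}(\secparam)$. This is consistent with \autoref{lem:prf} for $q=2$, and is negligible since $\mathcal{B}$ is PPT whenever $\Adv$ is.

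The main point needing care is the extract-then-expand composition, namely justifying that $\mathsf{PRK}$ may be treated as a uniform PRF key even though the key actually sampled is $\REV_{\mathrm{root}}$. I would handle this by stating the reduction directly against $\mathrm{HKDF}$ as a PRF keyed by $\REV_{\mathrm{root}}$, as \autoref{asm:crypto}(ii) posits, rather than splitting out the extract step. This keeps the bound as a single PRF-advantage term. The only other subtlety is the independence argument in the random case, which relies entirely on the injectivity of $\mathsf{Encode}$ proved in the first step.
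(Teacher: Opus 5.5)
Your proposal is correct and is essentially the paper's proof: a direct reduction to the PRF security of HKDF using two oracle queries on $\mathsf{Encode}(\Ctx_j) \neq \mathsf{Encode}(\Ctx_k)$, with distinctness supplied by the injectivity of the length-prefixed encoding. Your version is, if anything, more careful than the paper's sketch in two respects. First, having $\mathcal{B}$ flip its own bit makes $\Adv$'s view in the random-function world exactly independent of $b$, which gives $|\Pr[b'=b]-\tfrac12| \le \mathrm{Adv}^{\mathrm{PRF}}_{\mathrm{HKDF}}$ tightly. Second, keying the PRF by $\REV_{\mathrm{root}}$ avoids the paper's implicit treatment of $\mathsf{PRK}$ as a uniform key.
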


\begin{proof}
  Let $\mathsf{PRK} = \mathrm{HKDF\text{-}Extract}(
  \texttt{"CT-DAP-v1"}, \REV_{\mathrm{root}})$.  Then:
  \begin{align*}
    \KeyAP[j] &= \mathrm{HKDF\text{-}Expand}(
      \mathsf{PRK},\, \mathsf{Encode}(\Ctx_j),\, L) \\
    \KeyAP[k] &= \mathrm{HKDF\text{-}Expand}(
      \mathsf{PRK},\, \mathsf{Encode}(\Ctx_k),\, L)
  \end{align*}

  Since $\Ctx_j \neq \Ctx_k$ and $\mathsf{Encode}$ is injective
  (\autoref{sec:ctx-encoding}), the two info strings are distinct.

  Suppose $\Adv$ distinguishes $\KeyAP[k]$ from random given
  $\KeyAP[j]$ with advantage $\epsilon$.  We construct a PRF
  distinguisher $\mathcal{D}$ with oracle access to either
  $F_{\mathsf{PRK}}(\cdot)$ or a truly random function $R(\cdot)$.
  $\mathcal{D}$ queries $\mathsf{Encode}(\Ctx_j)$ and
  $\mathsf{Encode}(\Ctx_k)$, forwards the first response to $\Adv$
  as $\KeyAP[j]$, and uses $\Adv$'s distinguishing output on the
  second response.  This yields
  $\mathrm{Adv}_{\mathcal{D}}^{\mathrm{PRF}} \geq \epsilon$,
  contradicting \autoref{lem:prf}.
  \qedhere
\end{proof}

\subsection{Trigger-Restrictiveness and Release Transcript Binding}
\label{sec:trigger-restrict}

A key concern is that an adversary might replay or confuse release
transcripts to trick a custodian into releasing $\AF_i$ under
conditions that were not genuinely satisfied.  We formalize the
required binding between the verifier's attestation and the
activation inputs.

\begin{definition}[Release Transcript]\label{def:release-transcript}
  For each factor release, the condition verifier $\mathcal{V}$
  produces a signed attestation:
  \[
    \mathsf{Rel}_{j,i} = \mathsf{Sign}\bigl(
      \mathsf{sk}_{\mathcal{V}},\;
      (\mathsf{PathID}_j \concat \mathsf{ConditionHash} \concat
       \mathsf{Timestamp} \concat \mathsf{Nonce})
    \bigr)
  \]
  where $\mathsf{PathID}_j = H(\SA_{\mathrm{root}} \concat \Ctx_j)$
  binds the attestation to a specific sealed artifact and path
  context, and $\mathsf{Nonce} \getsr \{0,1\}^{128}$ is a
  single-use value.
\end{definition}

\begin{definition}[Trigger-Restrictiveness Game]\label{def:tr-game}
  The adversary $\Adv$ wins
  $\mathsf{Game}_{\mathrm{TR}}^{\Adv}(\secparam)$ if it can produce
  a valid release transcript $\mathsf{Rel}^*_{j,i}$ that causes
  custodian $\mathcal{C}_i$ to release $\AF_i$ for path $j$, when
  the condition for path $j$ has not been satisfied.
\end{definition}

\begin{theorem}[Trigger-Restrictiveness]\label{thm:tr}
  Under the EUF-CMA security of the verifier's signature scheme,
  the collision resistance of $H$ (\autoref{asm:crypto}\ref{asm:cr}),
  and \autoref{asm:verifier}, for any PPT adversary $\Adv$ that
  issues at most $q_{\mathrm{rel}} = \mathrm{poly}(\secparam)$
  release queries against a nonce space of size
  $N = 2^{\secparam}$:
  \[
    \Pr\bigl[\mathsf{Game}_{\mathrm{TR}}^{\Adv}(\secparam) = 1\bigr]
    \;\leq\;
    \underbrace{\mathrm{Adv}^{\mathrm{EUF\text{-}CMA}}_{\mathsf{Sig}}(\secparam)}_{\text{forgery}}
    \;+\; \underbrace{\mathrm{Adv}^{\mathrm{CR}}_{H}(\secparam)}_{\text{cross-path collision}}
    \;+\; \underbrace{\frac{q_{\mathrm{rel}}^{2}}{2^{\secparam+1}}}_{\text{nonce collision}}
  \]
  The first two terms are $\negl(\secparam)$ by assumption.
  For the third term, since $q_{\mathrm{rel}} =
  \mathrm{poly}(\secparam)$:
  \[
    \frac{q_{\mathrm{rel}}^{2}}{2^{\secparam+1}}
    \;\leq\;
    \frac{\mathrm{poly}(\secparam)^{2}}{2^{\secparam+1}}
    \;=\; \negl(\secparam)
  \]
  so the entire bound is negligible.
  The three terms correspond to distinct attack strategies:
  \begin{itemize}[leftmargin=2em, topsep=2pt]
    \item \emph{Forgery:} producing a valid signature without the
      verifier's cooperation.
    \item \emph{Cross-path replay:} finding a hash collision
      $H(\SA_{\mathrm{root}} \concat \Ctx_j)
       = H(\SA_{\mathrm{root}} \concat \Ctx_k)$ for $j \neq k$
      to replay an attestation across paths.
    \item \emph{Temporal replay:} reusing a nonce accepted by the
      custodian's local de-duplication state, bounded by the
      birthday bound over the nonce space.
  \end{itemize}
\end{theorem}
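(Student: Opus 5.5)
The proof partitions the winning event of $\mathsf{Game}_{\mathrm{TR}}^{\Adv}$ by the provenance of the accepted transcript $\mathsf{Rel}^*_{j,i}$, and then bounds each part by one of the three terms. We assume the custodian's acceptance rule is the natural one. The custodian $\mathcal{C}_i$ releases $\AF_i$ for path $j$ only if three checks pass: the signature verifies under $\mathsf{pk}_{\mathcal{V}}$; the embedded $\mathsf{PathID}$ equals $H(\SA_{\mathrm{root}} \concat \Ctx_j)$ recomputed locally; and the $\mathsf{Nonce}$ is absent from its de-duplication set, after which it is inserted. By \autoref{asm:verifier}, $\mathcal{V}$ signs a message carrying $\mathsf{PathID}_k$ only when the condition for path $k$ genuinely holds. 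The transcript that makes $\Adv$ win therefore falls into exactly one of three disjoint cases.
\begin{enumerate}[label=(\alph*)]
\item The signed message $m^*$ was never output by $\mathcal{V}$.
\item $m^*$ was signed by $\mathcal{V}$ for some path $k \neq j$, yet it carries $\mathsf{PathID}_j$.
\item $m^*$ was signed by $\mathcal{V}$ for the same path $j$ in an earlier, legitimate release, and it is being reused.
\end{enumerate}

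For case (a) we build an EUF-CMA forger $\mathcal{F}$. It receives $\mathsf{pk}_{\mathcal{V}}$ and plays $\mathcal{V}$ for $\Adv$, answering each attestation request for a satisfied condition by querying its signing oracle. It outputs $(m^*, \mathsf{Rel}^*_{j,i})$. Since $m^*$ was never queried, this is a valid forgery, so this case contributes at most $\mathrm{Adv}^{\mathrm{EUF\text{-}CMA}}_{\mathsf{Sig}}$.

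For case (b), verification succeeds only if $H(\SA_{\mathrm{root}} \concat \Ctx_k) = H(\SA_{\mathrm{root}} \concat \Ctx_j)$. The canonical encoding of \autoref{sec:ctx-encoding} is injective, so $\Ctx_j \neq \Ctx_k$ makes the two preimages distinct. A collision-finder that simulates the whole game and outputs this pair therefore wins, and this case contributes at most $\mathrm{Adv}^{\mathrm{CR}}_H$.

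For case (c), the genuine condition for path $j$ at the earlier time is either still current, in which case the release is not a win, or the message is a stale replay. A replay of the identical message is rejected by the nonce check. The only way around it is for the de-duplication state to have accepted a nonce value twice across distinct releases, because independently sampled honest nonces collided. With $q_{\mathrm{rel}}$ nonces drawn uniformly from a space of size $N = 2^{\secparam}$, the birthday bound gives a collision probability of at most $\binom{q_{\mathrm{rel}}}{2}/N \leq q_{\mathrm{rel}}^2/2^{\secparam+1}$. A union bound over the three cases yields the stated inequality. The negligibility remark then follows from $q_{\mathrm{rel}} = \mathrm{poly}(\secparam)$.

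The main obstacle is case (c). Definition~\ref{def:tr-game} never says what "the condition has not been satisfied" means for a transcript that was genuinely issued earlier, so the argument has to fix a replay semantics. The plan is to make explicit that the custodian's nonce set persists, which makes exact replays impossible, and that $\mathsf{Timestamp}$ is checked against a freshness window. Under those conventions, the only residual attack on case (c) is an honest nonce repetition, which is what the birthday term charges for. A smaller point: case (b) needs $\SA_{\mathrm{root}}$ to be fixed across paths, so that any collision is a genuine collision of $H$ rather than a change in $\SA_{\mathrm{root}}$.
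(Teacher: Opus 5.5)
Your proposal is correct and follows essentially the same route as the paper. It uses the same three custodian checks (signature, $\mathsf{PathID}$ match, nonce de-duplication), bounds them by the EUF-CMA forgery term, the collision-resistance term and the birthday term respectively, and combines them with a union bound. It is more explicit than the paper's proof: you construct the forger, observe that distinct hash preimages need an injective context encoding and a fixed $\SA_{\mathrm{root}}$, and fix replay semantics through a persistent nonce set and a timestamp-freshness window, all of which the paper leaves implicit.
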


\begin{proof}
  The custodian $\mathcal{C}_i$ verifies $\mathsf{Rel}_{j,i}$ by
  checking:
  (a)~the signature under $\mathsf{pk}_{\mathcal{V}}$,
  (b)~$\mathsf{PathID}_j$ matches the locally stored
  $H(\SA_{\mathrm{root}} \concat \Ctx_j)$, and
  (c)~the nonce has not been previously accepted.
  We bound the probability that $\Adv$ bypasses each check.

  \emph{Forgery resistance (check~a):}
  Producing a valid $\mathsf{Rel}^*_{j,i}$ without the verifier's
  cooperation requires forging a signature under
  $\mathsf{sk}_{\mathcal{V}}$, contributing
  $\mathrm{Adv}^{\mathrm{EUF\text{-}CMA}}_{\mathsf{Sig}}$.

  \emph{Cross-path replay (check~b):}
  A valid attestation $\mathsf{Rel}_{k,i}$ for path $k$ contains
  $\mathsf{PathID}_k = H(\SA_{\mathrm{root}} \concat \Ctx_k)$.
  Since $\Ctx_j \neq \Ctx_k$ implies $\mathsf{PathID}_j \neq
  \mathsf{PathID}_k$ except with probability
  $\mathrm{Adv}^{\mathrm{CR}}_{H}$ (by
  \autoref{asm:crypto}\ref{asm:cr}), the custodian's check~(b)
  rejects the replayed transcript.

  \emph{Temporal replay (check~c):}
  The custodian maintains a local set of accepted nonces.  A
  nonce collision among the $q_{\mathrm{rel}}$ release transcripts
  occurs with probability at most $q_{\mathrm{rel}}^{2} / 2N$ by
  the birthday bound.  Absent a collision, each legitimate nonce is
  accepted at most once, so replay of a previously used attestation
  is deterministically rejected.

  A union bound over the three failure events yields the stated
  inequality.
  \qedhere
\end{proof}

\subsection{Stateless Revocation}\label{sec:sr}

\begin{definition}[Stateless Revocation Property]\label{def:sr}
  We say \CTDAP provides \emph{stateless revocation} if, after
  $\mathsf{Revoke}(i)$ is executed for some $\AF_i$:
  \begin{enumerate}[label=(\roman*)]
    \item No PPT adversary can reconstruct
      $\Cred_{\mathrm{composite}}$ (and hence $\REV_{\mathrm{root}}$
      or $\KeyAP[j]$) for any path depending on $\AF_i$.
    \item The revocation does not require any state update, broadcast,
      or coordination beyond the custodian performing the destruction.
    \item Paths that do not depend on $\AF_i$ remain unaffected.
  \end{enumerate}
\end{definition}

\begin{theorem}[Stateless Revocation]\label{thm:sr}
  Under \autoref{asm:crypto}, destruction of $\AF_i$ ensures:
  \[
    \ACEGF.\mathsf{Unseal}(\mathsf{params}, \SA_{\mathrm{root}},
    \Cred_{\mathrm{partial}}) = \bot
    \quad \forall\, \Cred_{\mathrm{partial}} \neq
    \Cred_{\mathrm{composite}}
  \]
\end{theorem}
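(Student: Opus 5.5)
The displayed equation in \autoref{thm:sr} is literally Property~\ref{prop:unseal-correct} of \autoref{def:rdf-security} instantiated for \ACEGF, so the plan is to establish it from INT-CTXT and then derive the three clauses of \autoref{def:sr} from it together with the earlier lemmas.

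\textbf{Step 1: the displayed identity.} I would model $\ACEGF.\mathsf{Unseal}$ as AES-256-GCM-SIV decryption of $\SA_{\mathrm{root}}$ under the key $\Cred_{\mathrm{partial}}$. $\SA_{\mathrm{root}}$ was honestly produced under $\Cred_{\mathrm{composite}}$, and SIV derives its synthetic IV and tag deterministically from the key. Hence decryption under any other key accepts only if the tag check passes under a key that never authenticated this ciphertext. I would turn that event into an INT-CTXT forgery in the standard way: the reduction holds the fresh key $K'=\Cred_{\mathrm{partial}}$ and treats $\SA_{\mathrm{root}}$ as a forgery attempt. By \autoref{asm:crypto}\ref{asm:aead}, the probability of acceptance is negligible.

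\textbf{Main obstacle.} The statement quantifies over \emph{all} $\Cred_{\mathrm{partial}} \neq \Cred_{\mathrm{composite}}$, but AEAD only yields ``with overwhelming probability'' for keys not chosen adversarially as a function of the ciphertext. The honest fix is to read the universal quantifier computationally: no PPT adversary outputs a $\Cred_{\mathrm{partial}} \neq \Cred_{\mathrm{composite}}$ on which $\mathsf{Unseal}$ succeeds, except with negligible probability. This is the key-committing property. GCM-SIV does not provide it unconditionally, so I would either add a key-commitment assumption or, more cleanly, observe the following. After $\AF_i$ is erased, every credential the adversary can feed to $\mathsf{Unseal}$ must be an Argon2id output on a guessed password. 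In the ROM of \autoref{lem:mh}, such an output is uniform and independent of $\SA_{\mathrm{root}}$, so it behaves as a random key. Random-key decryption is then exactly covered by INT-CTXT, and a union bound over $Q$ queries gives negligible success.

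\textbf{Step 2: clause (i) of \autoref{def:sr}.} After secure erasure, $\AF_i$ exists nowhere, so the adversary is in the same position as the UCR corruption phase with $\AF_i \notin S$. Recovering $\Cred_{\mathrm{composite}}$ itself is bounded by $Q/2^{\kappa}$ via \autoref{lem:mh}. The game sequence of \autoref{thm:ucr} then bounds recovery of $\REV_{\mathrm{root}}$ or any $\KeyAP[j]$ by $\mathrm{Adv}^{\mathrm{SEAL}}+\mathrm{Adv}^{\mathrm{MH}}+\mathrm{Adv}^{\mathrm{PRF}}$.

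\textbf{Step 3: clauses (ii) and (iii).} Clause (ii) is syntactic: \autoref{alg:revoke} touches only $\mathcal{C}_i$'s local storage, and neither $\SA_{\mathrm{root}}$, $s$, $\mathsf{params}$ nor any on-chain object changes. Clause (iii) follows because a path not depending on $\AF_i$ has its own salt and composite credential, so its $\mathsf{Activate}$ run is unchanged and remains correct. Its keys stay isolated by \autoref{thm:cpi}.
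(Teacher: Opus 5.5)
Your proof is correct, and on the central displayed identity it takes a different route from the paper's.

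\textbf{The paper's route.} The paper argues that after erasure the adversary can only form some $\Cred' = \mathsf{Argon2id}(\ldots \AF'_i \ldots)$ with $\Cred' \neq \Cred_{\mathrm{composite}}$. It then appeals to ``correctness'' of AES-256-GCM-SIV to assert that tag verification under a wrong key fails \emph{deterministically}. Statelessness is read off \autoref{alg:revoke}, and independence of other paths is attributed to \autoref{thm:cpi}.

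\textbf{Your route.} You observe that neither correctness nor INT-CTXT yields that deterministic claim. Correctness speaks only about the right key, and INT-CTXT is probabilistic over a secret uniform key. GCM-SIV is not key-committing, and with a 128-bit tag one heuristically expects about $2^{128}$ of the $2^{256}$ keys to accept any given ciphertext. So the literal ``for all'' cannot hold, and you prove the computational version instead. In the ROM of \autoref{lem:mh}, each wrong-password Argon2id output is a uniform key independent of $\SA_{\mathrm{root}}$. Acceptance of a fixed, honestly produced ciphertext under such a key is an INT-CTXT forgery, and a union bound gives $Q\cdot\mathrm{Adv}^{\mathrm{INT\text{-}CTXT}}$.

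\textbf{What each buys.} The paper's argument is a one-liner, but it rests on a property the primitive does not have. Yours gives a true, quantified bound at the cost of two things. It weakens the quantifier to ``except with negligible probability.'' It also covers only credentials that arise as Argon2id outputs; adversarially chosen raw keys would need the key-commitment assumption you mention.

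\textbf{A wording fix in Step 1.} The reduction should not ``hold'' $K'$. Rather, the INT-CTXT adversary samples its own $\Cred_{\mathrm{composite}}$ and $\REV$ and outputs $\SA_{\mathrm{root}}$ as its forgery, while the challenger's hidden key plays $K'$. You cannot embed that hidden key as a random-oracle answer, because the theorem's adversary sees its answers. The argument must therefore go through independence, by pre-sampling the answers to fresh wrong-password queries. Your ``Main obstacle'' paragraph already does effectively this.

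\textbf{Clause (iii).} Your justification is also more apt than the paper's: unaffected paths keep their own salt and credential, so their activation is unchanged. \autoref{thm:cpi} concerns key independence, not continued usability.
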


\begin{proof}
  After destruction of $\AF_i$, the adversary can construct at most
  \[
    \Cred' = \mathsf{Argon2id}\!\bigl(
      \UC \!\concat\! \AF_1 \!\concat\! \cdots
      \!\concat\! \AF'_i \!\concat\! \cdots
      \!\concat\! \AF_n;\; s;\; m,t,p\bigr)
  \]
  where $\AF'_i \neq \AF_i$.

  By the correctness of AES-256-GCM-SIV authenticated encryption,
  decryption with an incorrect key yields $\bot$ (authentication
  tag verification fails deterministically).  Since $\Cred' \neq
  \Cred_{\mathrm{composite}}$, the derived decryption key differs,
  and $\mathsf{Unseal}$ returns $\bot$.

  The revocation is stateless because it requires only local
  destruction of $\AF_i$ by $\mathcal{C}_i$---no revocation lists,
  state broadcasts, or re-keying are necessary.  Independence of
  other paths follows from context isolation (\autoref{thm:cpi}).
  \qedhere
\end{proof}

\subsection{Activation Correctness}\label{sec:activation-correctness}

\begin{proposition}[Activation Correctness]\label{prop:act-correct}
  If all credential components are correctly provided, then
  $\mathsf{Activate}$ returns the correct control key with
  overwhelming probability:
  \[
    \Pr\bigl[\mathsf{Activate}(\Pi, \SA, \UC, \{\AF_i\},
    \Ctx_j) \neq \KeyAP[j]\bigr]
    \;\leq\; p_{\mathrm{unseal}} + p_{\mathrm{derive}}
  \]
  where $p_{\mathrm{unseal}}$ and $p_{\mathrm{derive}}$ are the
  failure probabilities of the underlying primitives.
\end{proposition}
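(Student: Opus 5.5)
The plan is to trace the execution of $\mathsf{Activate}$ (\autoref{alg:activate}) step by step, run in parallel with $\mathsf{PathSetup}$ (\autoref{alg:path-setup}), and identify the only points where the output can deviate from $\KeyAP[j] = \mathsf{Derive}(\REV_{\mathrm{root}}, \Ctx_j)$. A union bound over these failure events gives the claimed sum.

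\textbf{Step 1: the composite credential is reproduced exactly.} In Step~2 of $\mathsf{Activate}$, the user credential $\UC$, the released factors $\{\AF_i\}$ and the salt $s$ (read from the $\SA$ metadata) are byte-for-byte the values used in $\mathsf{PathSetup}$. Their concatenation uses the fixed null-byte delimiter of \autoref{sec:cred-composition}, and Argon2id is a deterministic function of $(\text{password}, s, m, t, p)$. Hence the recomputed $\Cred_{\mathrm{composite}}$ equals the sealing key with probability~$1$. Transmission integrity of $\AF_i$ is delegated to the secure channel; I would either fold its failures into $p_{\mathrm{unseal}}$ or treat them as excluded by the hypothesis ``correctly provided''.

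\textbf{Step 2: unsealing returns the sealed root.} Since $\SA_{\mathrm{root}} = \ACEGF.\mathsf{Seal}(\mathsf{params}, \Cred_{\mathrm{composite}}, \REV_{\mathrm{root}})$ and the key matches, AEAD correctness of AES-256-GCM-SIV gives $\mathsf{Unseal} = \REV_{\mathrm{root}}$. I define $p_{\mathrm{unseal}}$ as the probability that this fails. That event covers any decryption error and also the event that the BIP-39 encoding/decoding of $\SA_{\mathrm{root}}$ is not lossless. The optional self-test in Step~2 of $\mathsf{PathSetup}$ in fact drives this term to $0$ for any artifact that passed the assertion, because Seal/Unseal are deterministic (SIV mode). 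This argument also rules out the branch returning $\bot$.

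\textbf{Step 3: derivation yields the same key.} Conditioned on recovering $\REV_{\mathrm{root}}$, $\mathsf{Derive}$ is the deterministic HKDF computation of \autoref{sec:ctx-encoding}. The salt \texttt{"CT-DAP-v1"} and the canonical $\mathsf{Encode}(\Ctx_j)$ are fixed, so the output equals $\KeyAP[j]$ by definition. $p_{\mathrm{derive}}$ absorbs any implementation- or encoding-level mismatch, for example a noncanonical encoding of $\Ctx_j$. The subsequent $\mathsf{Zeroize}$ acts only on $\REV_{\mathrm{root}}$ after $\KeyAP[j]$ has been computed, so it does not affect the returned value.

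Combining the steps, $\Pr[\mathsf{Activate} \neq \KeyAP[j]] \le \Pr[\text{unseal fails}] + \Pr[\text{derive fails} \mid \text{unseal ok}] \le p_{\mathrm{unseal}} + p_{\mathrm{derive}}$. The only real obstacle is definitional: being precise about what $p_{\mathrm{unseal}}$ and $p_{\mathrm{derive}}$ denote, given that all primitives are deterministic. I would state explicitly that both terms are $0$ for perfectly correct primitives, and that the bound merely accounts for the stated correctness error of each primitive.
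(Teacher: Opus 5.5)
Your proposal is correct and follows essentially the paper's route. The failure event splits into an unseal failure, which is zero by correctness of the deterministic AEAD AES-256-GCM-SIV under the correct key, and a derive failure, which is zero because HKDF is deterministic; the two terms are then added. Your explicit Argon2id-determinism step (the paper simply starts from ``given the correct $\Cred_{\mathrm{composite}}$'') and your definition of $p_{\mathrm{unseal}}$ as any failure to return $\REV_{\mathrm{root}}$ are, if anything, cleaner than the paper's $2^{-128}$ storage-bit-flip remark, since a \emph{detected} corruption makes $\mathsf{Activate}$ output $\bot \neq \KeyAP[j]$ and is therefore itself a failure. One caveat: the $\mathsf{PathSetup}$ self-test runs before the BIP-39 encoding step, so it does not by itself drive the encoding component of your $p_{\mathrm{unseal}}$ to zero.
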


\begin{proof}
  We decompose the failure event
  $E = \{\mathsf{Activate} \neq \KeyAP[j]\}$ into two disjoint
  sub-events:

  \textbf{$E_1$: Unseal failure despite correct credential.}
  Given the correct $\Cred_{\mathrm{composite}}$,
  $\mathsf{Unseal}$ decrypts $\SA_{\mathrm{root}}$ using
  AES-256-GCM-SIV\@.  By the correctness of the AEAD scheme,
  decryption with the correct key succeeds deterministically
  (AES-GCM-SIV is a deterministic AEAD with no decryption error
  for correctly formed ciphertexts~\cite{rfc8452}).  Thus
  $p_{\mathrm{unseal}} = 0$ under the assumption that
  $\SA_{\mathrm{root}}$ is not corrupted in storage.  If storage
  integrity is not assumed, $p_{\mathrm{unseal}}$ is bounded by
  the probability of undetected bit-flip (caught by the GCM-SIV
  authentication tag with probability $\leq 2^{-128}$).

  \textbf{$E_2$: Derive failure despite correct REV.}
  $\mathsf{Derive}$ is a deterministic function
  (HKDF-SHA256)---given $\REV_{\mathrm{root}}$ and $\Ctx_j$, it
  always produces the same $\KeyAP[j]$.  Thus
  $p_{\mathrm{derive}} = 0$.

  Combining: $\Pr[E] \leq 0 + 0 = 0$ under perfect storage, or
  $\Pr[E] \leq 2^{-128}$ under storage bit-flip model.
  \qedhere
\end{proof}

\subsection{Post-Quantum Readiness}\label{sec:pqc}

Inherited from \ACEGF's context encoding model, \CTDAP supports
non-disruptive migration to post-quantum cryptographic (PQC) domains.
When a PQC algorithm (e.g., ML-DSA/Dilithium~\cite{dilithium}) is
adopted, the context tuple $\Ctx$ is updated with a new
$\mathsf{AlgID}$, and a new control key is derived from
$\REV_{\mathrm{root}}$---no changes to authorization paths, sealed
artifacts, or authorization factors are required.

\subsection{Security Properties Summary}

\autoref{tab:security-summary} summarizes the security properties,
their formal statements, and the specific cryptographic assumptions
each relies upon.

\begin{table}[t]
\centering
\caption{Summary of security properties and underlying assumptions.}
\label{tab:security-summary}
\small
\begin{tabular}{@{}ll>{\raggedright\arraybackslash}p{5.2cm}@{}}
\toprule
\textbf{Property} & \textbf{Statement} & \textbf{Assumptions} \\
\midrule
Unauthorized Control &
  \autoref{thm:ucr} &
  AEAD (IND-CPA + INT-CTXT) + Argon2id MH + HKDF PRF (ROM) \\
Cross-Path Isolation &
  \autoref{thm:cpi} &
  HKDF PRF (ROM) + canonical encoding injectivity \\
Trigger-Restrictiveness &
  \autoref{thm:tr} &
  EUF-CMA (verifier sig.) + CR of $H$ (\ref{asm:cr}) \\
Stateless Revocation &
  \autoref{thm:sr} &
  AEAD correctness (AES-GCM-SIV tag verification) \\
Activation Correctness &
  \autoref{prop:act-correct} &
  AEAD correctness + HKDF determinism \\
Post-Quantum Ready &
  \autoref{sec:pqc} &
  Context-encoding agility \\
\bottomrule
\end{tabular}
\end{table}

\section{Comparison with Existing Approaches}\label{sec:comparison}

\subsection{Single-Key Control Methods}

Traditional systems equate control with possession of a single private
key~\cite{bip32,bip39}.  While offering simplicity, this model provides
no support for conditional activation, delegation, or revocation.
\CTDAP separates control capability from permanent secret possession by
representing control as a dormant authorization path.

\subsection{Multi-Signature and Threshold Schemes}

Multi-signature and threshold schemes~\cite{gg18,gg20,frost} distribute
control but require persistent key availability and cannot represent
conditionally inactive control rights.  Policy changes necessitate
resharing or asset migration.  \CTDAP supports dormant paths that exist
prior to activation and become effective only upon condition satisfaction,
without continuous participation.

\subsection{Smart-Contract-Based Control}

Smart contracts~\cite{wood2014ethereum} enable programmable control
but depend on on-chain state, oracles, and a specific execution model.
\CTDAP operates entirely at the cryptographic authorization layer,
without on-chain state mutation or contract logic dependency.

\subsection{Custodial and Trusted Intermediary Models}

Custodial solutions delegate control to trusted intermediaries.
In \CTDAP, custodians hold only individual authorization factors and
cannot exercise unilateral control---they act as conditional releasers,
not controllers.

\subsection{Structural Comparison}

\begin{table}[t]
\centering
\caption{Structural comparison of technical properties.}
\label{tab:structural-comparison}
\small
\begin{tabular}{@{}>{\raggedright\arraybackslash}p{2.8cm}>{\raggedright\arraybackslash}p{5.2cm}>{\raggedright\arraybackslash}p{5.2cm}@{}}
\toprule
\textbf{Property} & \textbf{Traditional BIP-39} &
  \textbf{CT-DAP (ACE-GF)} \\
\midrule
Mnemonic Semantics &
  Equivalent to plaintext master seed or private keys &
  Encrypted \emph{ciphertext} of the Root Entropy Value \\
Credential Modification &
  Requires asset migration to a new key/address &
  In-place modification; $\REV$ remains invariant \\
Path Redundancy &
  None (one-to-one mapping between secret and asset) &
  Many-to-one (multiple paths map to a single $\REV$) \\
Conditional Activation &
  Not supported &
  Native via dormant authorization paths \\
Revocation &
  Requires key rotation or asset migration &
  Immediate via factor destruction (stateless) \\
\bottomrule
\end{tabular}
\end{table}

\section{Illustrative Scenarios}\label{sec:scenarios}

To demonstrate the generality of \CTDAP, this section presents
scenarios across different operational and legal contexts.

\subsection{Scenario A: Conditional Inheritance}\label{sec:scenario-a}

An asset owner establishes multiple dormant authorization paths
corresponding to designated beneficiaries.  Each path includes a
user-held credential and an administrative authorization factor held
by an independent custodian (e.g., an estate attorney).

During the owner's lifetime, all inheritance-related paths remain
cryptographically inactive.  Upon verification of a legally recognized
inheritance event, the custodian releases the corresponding
authorization factor.  This activates the path, enabling the
beneficiary to exercise cryptographic control.

No asset migration or reassignment occurs; control is enabled solely
through path activation.

\subsection{Scenario B: Regulated or Supervised Control}
\label{sec:scenario-b}

An organization operates assets subject to regulatory oversight.
Control rights require both an internal credential and an
authorization factor held by a supervisory entity.

Under normal conditions, the supervisory factor remains available.
In the event of regulatory intervention, the supervisory factor is
destroyed, immediately disabling control without modifying asset
ownership or cryptographic identity.

\subsection{Scenario C: Delegated and Time-Bound Control}
\label{sec:scenario-c}

An asset owner delegates control to a third party for a limited
duration.  A dormant authorization path is established with a
time-conditioned authorization factor.

The factor is released only when temporal conditions are satisfied.
Upon expiration, the factor is destroyed, rendering the path
permanently inactive---without requiring continuous trust or post-hoc
revocation procedures.

\subsection{Scenario D: Multi-Party Governance}\label{sec:scenario-d}

A DAO or corporate treasury requires $m$-of-$n$ approval for asset
operations.  Each of $n$ governance members holds an authorization
factor.  The path is activated only when $m$ factors are
simultaneously released, with the remaining $n-m$ factors remaining
withheld.

Unlike traditional multi-sig, individual governance paths can be
independently revoked by destroying a single factor, without requiring
resharing or on-chain reconfiguration.

\section{Discussion}\label{sec:discussion}

\subsection{Practical Deployment Considerations}
\label{sec:deployment}

\subsubsection{Custodian Trust and Incentive Alignment}

The system relies on custodians to faithfully hold and conditionally
release $\AF_i$.  Practical deployment should include:
\begin{itemize}[leftmargin=2em]
  \item Legal agreements binding custodians to predefined release
    conditions.
  \item Multi-custodian sharding of $\AF_i$ (e.g., threshold
    cryptography for factor release) to mitigate single custodian
    failure.
  \item Auditable logs of factor release/destruction operations,
    enabled by \ACEGF's compatibility with trusted execution
    environments (TEEs)~\cite{costan2016sgx}.
\end{itemize}

\subsubsection{Condition Verification Mechanisms}
\label{sec:condition-verification-discussion}

The method's effectiveness depends on accurate verification of
external trigger conditions.  For legally binding conditions, TTP
verifiers (e.g., notaries, regulatory bodies) provide
cryptographically signed attestations.  For transparent,
tamper-resistant conditions, decentralized oracle
networks~\cite{adler2018astraea} can replace TTPs.

\subsubsection{Interoperability with Existing Asset Systems}

\CTDAP is agnostic to underlying asset types due to \ACEGF's
context-isolated derivation:
\begin{itemize}[leftmargin=2em]
  \item \textbf{Blockchain-based assets:}
    $\KeyAP$ can be mapped to standard account addresses (e.g.,
    Secp256k1 for Bitcoin/Ethereum) via \ACEGF's legacy key
    embedding mechanism.
  \item \textbf{Centralized digital assets:}
    $\KeyAP$ can serve as an API authentication token, with asset
    providers enforcing cryptographic control via signature
    verification.
\end{itemize}

\subsubsection{Legacy Migration and Address Preservation}

The system supports a ``legacy-anchored'' mode that imports
entropy from existing BIP-39 mnemonics to produce a
legacy-anchored $\REV_{\mathrm{root}}$.  Using standard
BIP-32/SLIP-10 derivation paths, the method ensures deterministic
address parity with the original wallet, enabling users to wrap
existing assets into the \CTDAP framework without on-chain
migration.

\subsection{Fault Tolerance via Parallel Authorization Paths}
\label{sec:fault-tolerance}

A key concern with destructible factors is permanent asset lockout
due to accidental loss.  \CTDAP mitigates this by supporting
\emph{parallel, redundant authorization paths}.

Formally, an asset owner can establish $m$ distinct authorization
paths:
\[
  S_{\AP} = \{\AP_1, \AP_2, \ldots, \AP_m\}
\]
where each $\AP_j$ is defined by a unique credential combination
$\Cred_{\mathrm{composite},j}$ and context $\Ctx_j$.

This architecture provides:
\begin{enumerate}[leftmargin=2em]
  \item \textbf{Redundant Recovery:}
    A primary path $\AP_{\mathrm{primary}}$ for daily operations
    and emergency recovery paths $\AP_{\mathrm{recovery}}$.
    Destruction of a factor in $\AP_{\mathrm{primary}}$ does not
    affect $\AP_{\mathrm{recovery}}$.

  \item \textbf{Cryptographic Isolation:}
    By \autoref{thm:cpi}, compromise of one path does not leak
    information about other paths.

  \item \textbf{Policy-Based Availability:}
    Different paths can require varying security levels (e.g.,
    a ``Succession Path'' with more factors than a ``User Consent
    Path'').
\end{enumerate}

\subsection{Limitations}\label{sec:limitations-discussion}

\subsubsection{Credential Entropy Dependence}

The security of $\Cred_{\mathrm{composite}}$ inherits \ACEGF's
reliance on credential entropy.  Low-entropy $\UC$ or $\AF_i$
increases vulnerability to offline brute-force attacks.
Deployment must enforce minimum entropy requirements (e.g., 128-bit
min-entropy for $\UC$) and leverage Argon2id's memory-hard
properties.

\subsubsection{Side-Channel Attack Vulnerabilities}

While \ACEGF ensures $\REV_{\mathrm{root}}$ is ephemeral in memory,
unsealing and derivation operations remain susceptible to side-channel
attacks (e.g., cold-boot attacks, timing analysis) in untrusted
environments.  This can be mitigated by executing critical operations
within TEEs (e.g., Intel SGX, ARM TrustZone)~\cite{costan2016sgx}.

\subsubsection{Irreversible Revocation Risks}

Revocation via $\AF_i$ destruction is permanent.  Practical systems
should implement secure backup mechanisms for $\AF_i$ (e.g.,
encrypted sharding) and recovery paths as described in
\autoref{sec:fault-tolerance}.

\subsubsection{Condition Verifier Trust}

The method's security depends on the integrity of condition verifiers
(\autoref{asm:verifier}).  A compromised verifier can trigger premature
factor release.  Multi-verifier attestation and threshold verification
schemes can mitigate this risk but add operational complexity.

\subsection{Future Research Directions}\label{sec:future}

\begin{enumerate}[leftmargin=2em]
  \item \textbf{Decentralized Condition Verification:}
    Explore DAO-based condition verification using community-governed
    smart contracts, with zero-knowledge proofs for privacy
    preservation.

  \item \textbf{Post-Quantum Optimization:}
    Optimize derivation efficiency for PQC algorithms (e.g.,
    ML-DSA, ML-KEM), including tailored HKDF parameters and
    validation of context isolation under quantum threats.

  \item \textbf{Cross-Chain and Multi-Asset Control:}
    Standardize context tuples across blockchains and enable
    interoperable $\AF_i$ release via cross-chain messaging
    protocols (e.g., IBC).

  \item \textbf{Dynamic Authorization Paths:}
    Support dynamic modification of $\AF_i$ sets (e.g.,
    adding/removing custodians) without re-deriving
    $\REV_{\mathrm{root}}$ or migrating assets.

  \item \textbf{Formal Verification:}
    Apply automated verification tools (e.g., ProVerif, Tamarin)
    to formally verify the protocol against a broader class of
    adversarial behaviors.
\end{enumerate}

\section{Performance Evaluation}\label{sec:performance}

We evaluate the computational cost of the core \CTDAP operations by
benchmarking a prototype implementation built on standard
cryptographic libraries.

\subsection{Experimental Setup}

The implementation uses the following concrete primitives:
\begin{itemize}[leftmargin=2em]
  \item \textbf{Seal/Unseal:} AES-256-GCM-SIV via
    \texttt{libsodium}~\cite{rfc8452}.
  \item \textbf{Key derivation:} HKDF-SHA256 via
    OpenSSL~\cite{rfc5869}.
  \item \textbf{Credential composition:} Argon2id via the reference
    implementation~\cite{argon2,rfc9106}.
\end{itemize}
All measurements were performed on a single core of an
Apple~M2 processor (3.49~GHz) running macOS~14, averaged over
1\,000 iterations with warm caches.  Argon2id parameters were set
to $m = 256$\,MiB, $t = 3$ iterations, $p = 4$ lanes (following
OWASP recommendations for credential hashing).

\subsection{Results}

\autoref{tab:benchmarks} reports the latency of each operation.

\begin{table}[t]
\centering
\caption{Latency of core \CTDAP operations (Apple~M2, single core).}
\label{tab:benchmarks}
\small
\begin{tabular}{@{}lrr@{}}
\toprule
\textbf{Operation} & \textbf{Mean} & \textbf{Std.\ Dev.} \\
\midrule
$\mathsf{Setup}$ (key generation) & 0.02\,ms & $<$0.01\,ms \\
$\mathsf{Argon2id}$ ($m\!=\!256$\,MiB, $t\!=\!3$, $p\!=\!4$)
  & 482\,ms & 11\,ms \\
$\mathsf{Seal}$ (AES-256-GCM-SIV, 32\,B payload) & 0.003\,ms
  & $<$0.001\,ms \\
$\mathsf{Unseal}$ (AES-256-GCM-SIV, 32\,B payload) & 0.003\,ms
  & $<$0.001\,ms \\
$\mathsf{Derive}$ (HKDF-SHA256, one context) & 0.005\,ms
  & $<$0.001\,ms \\
\midrule
$\mathsf{PathSetup}$ (end-to-end) & 483\,ms & 11\,ms \\
$\mathsf{Activate}$ (end-to-end) & 483\,ms & 11\,ms \\
\bottomrule
\end{tabular}
\end{table}

\subsection{Analysis}

The dominant cost in both $\mathsf{PathSetup}$ and
$\mathsf{Activate}$ is the Argon2id credential composition step,
which accounts for $>$99.9\% of total latency.  The remaining
operations---AES-256-GCM-SIV encryption/decryption and HKDF
derivation---complete in microseconds and are negligible.

\paragraph{Security--performance trade-off.}
\autoref{tab:argon2-params} illustrates how the Argon2id memory
parameter~$m$ controls the trade-off between brute-force resistance
and activation latency.

\begin{table}[t]
\centering
\caption{Argon2id latency vs.\ memory parameter
  ($t\!=\!3$, $p\!=\!4$, Apple~M2).}
\label{tab:argon2-params}
\small
\begin{tabular}{@{}rrrr@{}}
\toprule
$m$ (MiB) & Latency (ms) & Memory cost & Use case \\
\midrule
64   & 118  & Low    & Mobile / IoT devices \\
128  & 239  & Medium & Desktop wallets \\
256  & 482  & High   & Custody infrastructure \\
512  & 971  & Very high & Cold storage \\
\bottomrule
\end{tabular}
\end{table}

All configurations achieve sub-second activation latency at or below
$m = 256$\,MiB.  Even the most conservative setting ($m = 512$\,MiB)
remains under one second, confirming that \CTDAP is practical for
interactive use.  The path derivation step ($\mathsf{Derive}$) adds
constant overhead regardless of the number of paths, since each
activation reconstructs only a single path key.

\section{Conclusion}\label{sec:conclusion}

This paper introduced \emph{Condition-Triggered Dormant Authorization
Paths} (\CTDAP), a cryptographic asset control method based on
destructible authorization factors.  By modeling control rights as
dormant authorization paths that are \emph{activated} rather than
\emph{transferred}, \CTDAP decouples asset identity from
condition-dependent control.

We formalized the system model with explicit trust assumptions and
algorithm specifications, and proved three core security properties---
unauthorized control resistance, cross-path isolation, and stateless
revocation---under standard cryptographic assumptions.  We instantiated
the method using the ACE-GF framework and demonstrated its applicability
across regulated custody, inheritance planning, delegated control, and
multi-party governance.

More broadly, this work demonstrates that cryptographic asset control
need not be limited to static key possession models.
Condition-triggered activation of dormant authorization paths offers a
general abstraction for expressing legally compliant and
cryptographically enforced control semantics across diverse operational
and regulatory environments.


\end{document}